\documentclass[12pt]{article}
\usepackage[utf8]{inputenc}
\usepackage[margin=1in]{geometry}
\usepackage{amsmath}
\usepackage{amssymb}
\usepackage{amsthm}
\usepackage{natbib}
\usepackage{hyperref}
\usepackage{graphicx}
\usepackage{booktabs}
\usepackage{tabularx}
\usepackage{ragged2e} 
\usepackage{array}    

\newtheorem{theorem}{Theorem}

\newtheorem{proposition}[theorem]{Proposition}

\theoremstyle{definition}
\newtheorem{definition}[theorem]{Definition}

\newcommand{\paragraphit}[1]{\paragraph*{\textnormal{\textit{#1}}}}

\title{The Tragedy of Productivity: A Unified Framework for Diagnosing
  Coordination Failures in Labor Markets and AI Governance\thanks{This
    is a preliminary working paper. We specifically welcome feedback
    and critical engagement from field experts. The views expressed
    are those of the author and do not necessarily represent the views
    of any affiliated organizations.}}

\author{Ali Dasdan\thanks{This article was written with AI assistance
    (Claude, Gemini, ChatGPT), reflexively demonstrating the
    structural pressure of adoption for efficiency while using the
    technology to analyze its own governance. All arguments,
    interpretations, and conclusions are the author's own, as is
    responsibility for any errors.}  \\ KD Consulting\\ Saratoga, CA,
  USA\\ alidasdan@gmail.com}

\date{\today}

\begin{document}

\maketitle

\begin{abstract}
  In 1930, John Maynard Keynes predicted that technological progress
  would reduce the working week to 15 hours by century's end. Despite
  productivity increasing eightfold since then, workers globally still
  work roughly double these hours. Separately, leading AI researchers
  warn of existential risks from artificial intelligence, yet
  development accelerates rather than slows. We demonstrate these
  failures share identical game-theoretic structure: coordination
  failures arising when individual rationality produces collective
  harm.

  We synthesize from established game theory five necessary and
  sufficient conditions that characterize such coordination failures
  as \textit{structural tragedies}: N-player structure with multiple
  independent actors, binary choices generating negative
  externalities, dominance property where defection yields higher
  payoffs, Pareto-inefficiency where cooperation dominates mutual
  defection, and enforcement difficulty from structural barriers. We
  extend this framework to quantify tragedy severity through condition
  intensities, introducing a Tragedy Index that reveals governance of
  transformative AI breakthroughs faces orders-of-magnitude greater
  coordination difficulty than historical cases including climate
  change, nuclear weapons, or bank runs. We validate this framework
  across canonical cases including commons tragedies, security
  dilemmas, bank runs, and space debris.

  Applied to \textit{productivity} competition, we prove firms face a
  coordination failure preventing productivity gains from translating
  fully to worker welfare. We show this formally for working hours;
  the same structural logic extends to wages, benefits, and job
  security. European evidence confirms that, even under maximally
  favorable institutional conditions, productivity welfare decoupling
  persists, manifesting differently across institutional contexts.

  Applied to \textit{AI governance}, we show development faces the
  same structure but with amplified intensity across eight dimensions
  compared to successful arms control cases, making coordination
  structurally more difficult than for nuclear weapons. The
  Russia-Ukraine drone war provides empirical validation: despite
  international dialogue on weapons governance, both sides escalated
  from dozens to thousands of drones monthly within two years, driven
  by competitive necessity that overwhelmed coordination aspirations.

  The analysis is diagnostic rather than prescriptive: we synthesize
  from the literature structural barriers both to standard solutions
  and to the fundamental changes that would be required to resolve
  these tragedies, rather than proposing implementable solutions whose
  feasibility our framework questions.
\end{abstract}

\tableofcontents

\section{Introduction}

\subsection{The Productivity Puzzle}

John Maynard Keynes famously predicted in 1930 that technological
progress would reduce the working week to 15 hours by century's
end\footnote{While Keynes's phrasing was speculative (``Three-hour
shifts or a fifteen-hour week may put off the problem''), the
literature has standardly interpreted this as a prediction of future
working hours (e.g., \citep{crafts2022}).} \citep{keynes1930}. Nearly a
century later, this prediction appears dramatically wrong. Workers
globally typically work 1,350-2,200+ hours annually \citep{oecd2025},
far exceeding Keynes's predicted 780 hours, and in many sectors
working hours have increased rather than decreased \citep{schor1992};
this competitive escalation is exemplified by the `996' work culture
(9 am to 9 pm, 6 days a week) in China \citep{wiki996}, which has even
been emulated by some US startups seeking to match that intensity
\citep{nytimes2025hustle}.

This failure demands explanation. The technological capacity for
reduced working hours clearly exists: productivity per worker-hour has
increased approximately eightfold since Keynes wrote
\citep{gordon2016}. The material abundance Keynes envisioned has been
achieved: modern economies produce far more goods and services per
hour of labor than was possible in 1930. Yet this capacity for leisure
remains unrealized.

Why? Conventional explanations typically invoke cultural factors or
policy failures. We argue these explanations are incomplete. The
persistence of long working hours despite dramatic productivity gains
reflects not mere policy failure or cultural preferences, but a
structural coordination problem analogous to tragedies in other
domains.

This study is also motivated by a persistent empirical pattern
observed by the author in the software and technology sectors over the
last two decades. In these industries, leadership consistently demands
output expansion following productivity gains, often reducing
engineering headcount rather than stabilizing it. This `do more with
less' dynamic persists even when cost-cutting threatens organizational
sustainability.

\subsection{The AI Governance Puzzle}

A parallel puzzle emerges in artificial intelligence
development. Leading AI researchers consistently warn of existential
risks from advanced AI systems \citep{bostrom2014, russell2015,
  amodei2016}. Survey evidence shows substantial minorities of AI
researchers estimate non-trivial probabilities of extremely bad
outcomes, including human extinction \citep{bengio2025,
  grace2018}. Major AI laboratories publicly commit to safety and
responsible development \citep{brundage2018}. 

Yet development accelerates rather than slows. Investment in AI
capabilities races ahead \citep{fedreserve2025}. Voluntary pauses fail
to materialize. Safety measures, while present, appear insufficient
relative to the risks researchers identify
\citep{askell2019}. International coordination remains nascent and
non-binding.

Why can't we coordinate even when the stakes appear existential? Why
do warnings from the field's leading researchers fail to slow
development? Why do commitments to safety coexist with accelerating
capabilities races?

\subsection{The Thesis: Structural Identity}

We argue these puzzles exhibit the same mathematical structure. Both
represent \textit{structural tragedies}, coordination failures where
individually rational choices produce collectively suboptimal outcomes
and where coordination faces overwhelming structural barriers.

This structural identity has important implications. First, it
explains why standard solutions fail in both domains: the barriers are
structural, not merely political or cultural. Second, it enables
prediction: we can anticipate which coordination attempts will succeed
or fail based on whether they address the underlying structure. Third,
it clarifies what kinds of interventions might succeed: only those
that fundamentally alter the competitive architecture itself.

\subsection{Preview of Contributions}

This paper makes four contributions:

\textbf{First}, we synthesize a unified framework from established
game theory \citep{dawes1980, jervis1978} identifying five necessary
and sufficient conditions that create structural tragedies. These
conditions provide a diagnostic tool applicable across domains. We
validate this framework across canonical cases including environmental
commons \citep{hardin1968}, security dilemmas \citep{mearsheimer2001},
bank runs \citep{diamond1983}, and other established coordination
failures.

\textbf{Second}, we extend this framework to quantify tragedy severity
through condition intensities. While the framework treats conditions
as binary (present or absent) for analytical clarity, each condition
varies in intensity in real-world applications. We introduce a Tragedy
Index that systematically compares coordination difficulty across
domains, revealing that governance of transformative AI breakthroughs
faces orders-of-magnitude greater challenges than historical cases
including climate change, nuclear weapons, or bank runs.

\textbf{Third}, we apply this framework to productivity competition
and show that it satisfies all five conditions, explaining Keynes's
failed prediction structurally rather than culturally. We formalize
the coordination problem firms face when productivity increases,
showing why welfare improvements remain difficult despite
technological capacity. European evidence shows that, even under
maximally favorable institutional conditions such as wealthy
democracies, strong institutions, and supranational governance,
coordination achieves only partial, costly overrides of the tragic
logic that erode over time.

\textbf{Fourth}, we demonstrate that AI development exhibits the same
structure but with amplified intensity. Systematic comparison with
historical arms control reveals that AI faces unfavorable conditions
on every dimension where nuclear, biological, chemical (NBC) weapons
coordination succeeded: dual-use inseparability, verification
impossibility, low entry barriers, absence of substitutes, and
convergence of economic and military imperatives. The Russia-Ukraine
drone war provides empirical validation, showing how rapidly
competitive dynamics override governance intentions when existential
stakes are involved.

\subsection{Organization}

The remainder of this paper is organized as follows. Section
\ref{sec:literature} reviews the related literature on social
dilemmas, productivity, and AI safety. Section \ref{sec:framework}
introduces our unified framework, defining the five necessary
conditions for structural tragedy, proving these conditions are
necessary and sufficient, and introducing the Tragedy Index. Section
\ref{sec:productivity} applies this framework to labor markets,
providing a formal proof of the coordination failure in working hours
and examining the European case as empirical validation. Section
\ref{sec:ai} extends the analysis to AI governance, offering a
comparative analysis with historical arms control and examining the
Russia-Ukraine drone conflict. Section \ref{sec:implications} discusses
testable predictions, falsifiability, and potential interventions.
Section \ref{sec:conclusion} concludes.

\section{Related Literature}
\label{sec:literature}

Our work synthesizes insights from multiple, often separate, literatures.

\subsection{The Prisoner's Dilemma and Social Dilemmas}

The prisoner's dilemma, introduced by Tucker and formalized by
\citet{rapoport1965}, represents the canonical coordination
failure. Two players each face a choice between cooperation and
defection. Defection dominates cooperation regardless of the other's
choice, yet mutual cooperation Pareto-dominates mutual defection. The
unique Nash equilibrium—mutual defection—is recognizably suboptimal.

\citet{schelling1960, schelling1973} extended this analysis to
N-player settings with binary choices and externalities. When multiple
actors\footnote{We use `player' and `actor' interchangeably for game
participants.} face similar decisions where individual choices impose
externalities on others, coordination failures can emerge even when
all actors understand the collective problem. The key insight is that
as N increases, the individual incentive to deviate grows
stronger—each actor's restraint provides smaller individual benefit
while the temptation to defect remains constant.

\citet{dawes1980} formalized the structure of ``social dilemmas,''
defining them as N-person games where individual rationality leads to
collective irrationality. His key contribution was identifying the
mathematical structure common to diverse coordination failures:
defection dominates cooperation for each individual, yet universal
cooperation yields higher payoffs than universal defection. Dawes
emphasized that social dilemmas persist even when actors fully
understand the collective harm, because rational individual choice
remains incompatible with collective welfare absent enforcement
mechanisms.

\citet{jervis1978} applied this logic to international relations,
analyzing the ``security dilemma'' (originally coined by
\citet{herz1950}) where states' efforts to increase their own
security—through arms buildups or territorial expansion—threaten other
states and trigger countermeasures that leave all parties less
secure. The security dilemma exhibits the core social dilemma
structure: each state has incentive to arm regardless of others'
choices (dominance), yet mutual restraint would benefit all
(Pareto-efficiency), but international anarchy prevents enforceable
agreements. Jervis demonstrated that this tragic dynamic persists even
when all states prefer peace, because the anarchic structure of
international relations makes credible commitment to cooperation
impossible.

The folk theorem literature \citep{fudenberg1986} proves that repeated
interaction can sustain cooperation under certain conditions through
trigger strategies. However, these results require strong assumptions:
players must be sufficiently patient, the game must repeat infinitely,
and actions must be perfectly observable. When these conditions
fail—as we show they do in productivity competition and AI
development—cooperation becomes unsustainable even with repetition.

This foundational literature, particularly the work of
\citet{dawes1980} on social dilemmas and \citet{jervis1978} on the
security dilemma, establishes the core game-theoretic structure that
this paper adopts. We synthesize these insights into a diagnostic
framework specifying exactly which five conditions create coordination
failures that resist standard solutions including repeated
interaction, reputation, and institutional design. This enables
systematic comparison across domains and reveals mathematical identity
where previous work identified domain-specific phenomena.

\subsection{Tragedies in Specific Domains}

Multiple literatures have identified coordination failures in specific
domains.

\citet{hardin1968} demonstrated how individual rationality leads to
resource depletion when property rights are absent. \citet{ostrom1990}
showed that under certain conditions—small groups, clear boundaries,
monitoring capacity, graduated sanctions—communities can overcome
commons tragedies through institutional design. However, her analysis
also revealed that many commons lack these favorable conditions,
making coordination structurally difficult.

\citet{jervis1978} formalized the security dilemma where states'
defensive preparations threaten other states, generating arms races
that make all parties less secure. \citet{mearsheimer2001} extended
this into offensive realism, arguing that international anarchy
creates tragic dynamics where security competition becomes
structurally difficult to avoid even when all states prefer peace. The
key insight is that uncertainty about intentions, combined with the
anarchic structure of international relations, makes cooperation on
security matters extremely difficult.

\citet{olson1965} analyzed collective action problems in public goods
provision, showing why rational individuals undercontribute to group
benefits. \citet{diamond1983} modeled bank runs as coordination
failures where individual rationality leads to systemic crisis. These
analyses confirm that coordination failures pervade economic life, not
merely environmental or security domains.

Our framework makes the underlying structure explicit, treating these
as variations of the same underlying social dilemma.

\subsection{Productivity and Working Hours}

The literature on working hours exhibits puzzlement about why
productivity gains don't translate into leisure \citep{keynes1930}.

Historical research documents the trajectory of working
hours. \citet{huberman2007} show that annual working hours in
industrialized economies declined from over 3,000 hours in 1870 to
approximately 1,800 hours by 2000, but this decline has essentially
stalled since the 1970s despite continued productivity
growth. \citet{pencavel2018} documents the relationship between
productivity and hours, finding that after initial declines, working
hours have remained stubbornly stable.

Theoretical explanations focus primarily on preferences and
culture. \citet{becker1965} developed time allocation theory showing
how rising wages could increase or decrease labor supply depending on
income and substitution effects. \citet{schnaiberg1980} introduced the
``treadmill of production'' concept, arguing that competitive
pressures force continuous expansion. \citet{alesina2005} examine
US-Europe differences in working hours and conclude that European
labor market regulations, advocated by unions in declining industries
who argued ``work less, work all,'' explain the bulk of the
difference, rather than cultural preferences or tax structures alone.

Recent work by \citet{acemoglu2020} and \citet{acemoglu2025} provides
important insights into technology adoption, distinguishing between
automation that displaces labor (``so-so automation'') and technology
that creates new tasks or augments workers
(``human-augmenting''). They argue that market distortions lead to
socially excessive investment in displacement.

While these theories explain specific drivers—\citet{schnaiberg1980}
on competitive intuition and \citet{acemoglu2020, acemoglu2025} on
technology choice—the literature lacks a unified game-theoretic
explanation for the coordination problem in \textit{allocation}. We
distinguish between \textit{which} technologies firms adopt (Acemoglu's
focus) and \textit{how} productivity gains are utilized (our
focus). Also, while extensive research documents the secular decline
in the labor share of income globally \citep{karabarbounis2014global,
  elsby2013decline}, a formal model explaining why competitive
pressure systematically prevents even human-augmenting productivity
gains from translating to worker welfare remains absent.

We provide the first formal game-theoretic model of productivity
competition as an N-player social dilemma. Building on
\citet{schnaiberg1980}'s treadmill intuition, we prove that firms face
a structural coordination failure: each firm has a dominant strategy
to expand output rather than reduce hours, yet universal restraint
would Pareto-dominate universal expansion. We formalize the barriers
to coordination that make this tragedy persist. This structural
framework explains \citet{schnaiberg1980}'s empirical observation,
complements \citet{acemoglu2020, acemoglu2025}'s analysis of
adoption failures with a theory of allocation failures, and clarifies
why the regulatory interventions documented by \citet{alesina2005}
require costly, persistent institutional support to partially override
market dynamics.

\subsection{AI Governance and Safety}

The AI safety literature has grown rapidly in recent years,
identifying multiple challenges in developing safe and beneficial AI
systems.

\citet{bostrom2014} provided a comprehensive analysis of existential
risks from superintelligent AI, emphasizing the orthogonality thesis
(intelligence and goals are independent) and instrumental convergence
(advanced AI systems will pursue convergent subgoals regardless of
final objectives). \citet{russell2019} framed AI safety as an inverse
reinforcement learning problem and proposed provably beneficial AI
through value alignment. \citet{amodei2016} identified concrete
problems in AI safety: avoiding negative side effects, reward hacking,
distributional shift, safe exploration, and robustness to
distributional change. \citet{dafoe2018} analyzed AI governance
challenges, identifying the need for international coordination on
development and deployment. \citet{bengio2025} provided a
comprehensive synthesis of the current evidence on the capabilities,
risks, and safety of advanced AI systems. \citet{panigrahy2025} proved
that under reasonable mathematical definitions of safety, trust, and
artificial general intelligence (AGI), a safe and trusted AI system
cannot be an AGI system. \citet{armstrong2016} analyzed ``racing to
the precipice'', the concern that competitive pressure could drive
unsafe AI development. Their analysis identifies the race dynamic but
does not provide a formal structural model explaining why coordination
fails.

While this literature thoroughly documents AI risks and governance
challenges, it lacks a formal structural analysis of why coordination
fails. What are the game-theoretic conditions that make AI governance
difficult? Why do commitments to safety coexist with accelerating
development? How does AI compare structurally to previous coordination
challenges like NBC weapons?

\section{A Unified Framework for Structural Tragedies}
\label{sec:framework}

We now present a unified framework identifying necessary and
sufficient conditions for structural tragedies. We formalize these
conditions, prove they are necessary and sufficient to generate
coordination failures, and validate the framework across canonical
cases.

We use the term ``Tragedy'' in the strict sense defined by
\citet{whitehead1925} and popularized by \citet{hardin1968}: ``the
solemnity of the remorseless working of things.'' A situation becomes
a tragedy not through error or malice, but through
inevitability. Specifically, it occurs when individual rationality
compels defection even though actors collectively recognize
cooperation as superior, while structural barriers prevent
coordination. The tragedy lies in this structural trap: actors are not
making mistakes—they are imprisoned by the logic of their own
survival.

We formalize the payoff structure as follows. Let $s = (s_1, s_2,
\ldots, s_n)$ be a strategy profile for $n$=N players, which is a
vector specifying the chosen strategy $s_i \in \{C, D\}$ for every
player $i$, where $C$ represents Cooperate (restraint) and $D$
represents Defect (advantage-seeking).

The utility (or payoff) for a specific player $i$ is given by a
function $U_i$ that maps this complete strategy profile to a
real-valued payoff:
\[
U_i(s) = U_i(s_1, s_2, \ldots, s_n)
\]

For clarity, we will often separate the strategy of player $i$ from
the strategies of all other players. We denote the strategy profile of
all players except $i$ as $s_{-i}$. This allows us to write
the utility function as $U_i(s_i, s_{-i})$.

Using this formal notation, we can precisely define the five
conditions for the tragedy:

\subsection{The Five Conditions}

A problem exhibits structural tragedy if and only if it satisfies the
following five conditions:

\begin{definition}[C1: N-Player Structure]
Multiple independent decision-makers ($n \geq 2$) each control their
own choices. No single player can solve the problem
unilaterally. Choices are made in a decentralized architecture without
centralized coordination.
\end{definition}

This condition specifies the basic game structure. If a single player
controlled all decisions, coordination would be trivial. The tragedy
emerges from decentralized choice architecture where multiple players
must independently decide their strategies.

\begin{definition}[C2: Binary Choice with Externalities]
Each player $i$ faces a binary choice between two strategies: Cooperate
($C$) representing restraint, or Defect ($D$) representing
advantage-seeking. Individual choices create externalities affecting
all players. Specifically, one player's defection imposes negative
externalities on others.
\end{definition}

The binary structure simplifies the game while capturing the essential
strategic tension. In practice, choices may appear continuous, but the
strategic logic often reduces to a binary decision: cooperate or
defect. The key is that individual choices impose externalities: one
player's defection makes cooperation more costly for others.

\begin{definition}[C3: Dominance Property]
Defection yields higher payoff than cooperation regardless of other
players' choices. Formally,
\[
U_i(D, s_{-i}) > U_i(C, s_{-i}) \quad \text{for any possible } s_{-i}
\]
\end{definition}

Dominance creates the dilemma. Each player has strict incentive to
defect regardless of what others do. This is often driven by
existential stakes. In high-stakes tragedies, survival instincts
override social norms; if cooperating while others defect leads to
elimination (e.g., death, bankruptcy), defection becomes an absolute
imperative. When survival is at stake, the incentive to defect becomes
overwhelming.

\begin{definition}[C4: Pareto-Inefficiency]
Universal cooperation Pareto-dominates universal defection. Formally,
let $s^C = (C, C, \ldots, C)$ be the Universal Cooperation
profile and let $s^D = (D, D, \ldots, D)$ be the Universal
Defection profile. This condition states that every player strictly
prefers the universal cooperation outcome:
\[
U_i(s^C) > U_i(s^D) \quad \text{for all players } i.
\]
\end{definition}

The equilibrium outcome is recognizably suboptimal: all players would
prefer mutual cooperation to mutual defection if they could
coordinate.

This condition ensures the tragedy is recognizable. Players understand
that mutual defection makes everyone worse off than mutual
cooperation. The tragedy is not hidden; it is visible, predictable,
and collectively unwanted. Yet the structure makes it difficult to
escape.

\begin{definition}[C5: Enforcement Difficulty]
Structural barriers prevent effective coordination mechanisms. These
barriers may include: (i) anarchy as in no central authority with
enforcement power, (ii) verification impossibility as in actions
cannot be reliably monitored, (iii) monitoring infeasibility as in
costs of monitoring exceed benefits of coordination, or (iv)
commitment problems as in players cannot credibly commit to future
cooperation.
\end{definition}

This condition distinguishes structural tragedies from simple
coordination problems. Many Pareto-inefficient equilibria can be
overcome through contracts, monitoring, or institutional
design. Structural tragedies persist because enforcement mechanisms
face fundamental barriers: not merely high costs but severe structural
obstacles.

\subsection{The Formal Result}

We now prove these five conditions are both necessary and sufficient
to generate structural tragedy.

\begin{theorem}[Necessity and Sufficiency of Tragedy Conditions]\label{thm:main}
Consider an N-player game satisfying Conditions 1-2 (structure and externalities). Then:
\begin{enumerate}
    \item[(i)] Universal defection $s^D$ is the unique Nash equilibrium.
    \item[(ii)] This equilibrium is stable against unilateral deviations.
    \item[(iii)] This equilibrium is Pareto-dominated by universal cooperation $s^C$.
    \item[(iv)] This equilibrium persists absent radical structural change.
\end{enumerate}
if and only if Conditions 3-5 (dominance, Pareto-inefficiency, and enforcement difficulty) also hold.
\end{theorem}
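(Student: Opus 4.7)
The plan is to prove the biconditional direction by direction, treating sufficiency (C3--C5 imply (i)--(iv)) as the main technical engine and necessity ((i)--(iv) imply C3--C5, given C1--C2) as a characterization result that ensures the five conditions are minimal rather than redundant.

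For the sufficiency direction I would proceed claim by claim. Claim (i), uniqueness of $s^D$ as Nash equilibrium, follows directly from the strict dominance in C3: at any profile containing at least one $C$, the corresponding player has a strictly profitable deviation to $D$, so $s^D$ is the only profile immune to unilateral deviations. Claim (ii), stability against unilateral deviations, is the same dominance inequality evaluated at $s_{-i}=(D,\ldots,D)$. Claim (iii) is literally C4. Claim (iv) follows from C5: because none of the four enforcement channels (central authority, verifiability, monitorability, credible commitment) is available, any attempt at partial cooperation unravels through the dominance inequality of C3, so only radical structural change---namely, removing one of the five conditions---can sustain a non-$s^D$ outcome.

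For necessity, given C1--C2 and (i)--(iv), observe first that C4 is identical to (iii), so only C3 and C5 require work. For C3 I would argue by contrapositive: if $D$ fails to strictly dominate $C$ for some player $i$ at some $s_{-i}$, then using the externality structure of C2 I would exhibit a profile distinct from $s^D$ at which no player strictly prefers to deviate, contradicting the uniqueness in (i). For C5 I would again argue by contrapositive: if all four enforcement channels were available, then standard results---folk theorems with trigger strategies, binding contracts, Ostrom-style monitoring with graduated sanctions, or centralized enforcement---would sustain cooperation as a stable outcome, violating the persistence claim (iv) and contradicting the framing established in the related-literature section.

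The main obstacle I anticipate lies in the necessity of C3. Uniqueness of the Nash equilibrium at $s^D$ formally forces strict dominance only along the all-defect profile, not at every opponent profile $s_{-i}$, so extending the inequality uniformly requires exploiting both the player symmetry implicit in the framework and the monotone externality structure implicit in C2 (more defectors worsen each cooperator's position). I would either supplement the proof with an explicit monotonicity lemma that turns C2's externality condition into a quantitative statement, or reformulate C3 as a local condition near $s^D$ that is then extended globally via C2 rather than assumed uniformly. The necessity of C5 is also somewhat delicate because the condition is disjunctive across four channels; I would discharge it by appealing to the folk theorem discussion in Section~\ref{sec:literature}, noting that leaving even a single enforcement channel unobstructed is, under standard patience and observability assumptions, incompatible with the persistence asserted in (iv).
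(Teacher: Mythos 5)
Your proposal is correct and follows essentially the same route as the paper: sufficiency is established claim-by-claim from strict dominance (uniqueness and stability of $s^D$), C4 (Pareto-domination), and C5 together with the failure of folk-theorem conditions (persistence), while necessity is argued by contrapositive, with the folk theorem discharging C5 and C4 read off directly from claim (iii). The obstacle you flag for the necessity of C3 is genuine but is not resolved in the paper either—its $\neg$C3 step merely asserts that a failure of dominance at some $s_{-i}$ ``creates the possibility'' of other equilibria such as $s^C$, i.e., it argues at the same informal level that your proposed monotonicity lemma would actually make rigorous.
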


\begin{proof}
We establish sufficiency and necessity separately, followed by the
overall conclusion.

\paragraph{Part I: Sufficiency.} Assume Conditions 1-5 hold. We prove claims (i)-(iv).

\paragraphit{Claim (i): Unique Nash Equilibrium.}

Let $s = (s_1, \ldots, s_n)$ be a strategy profile where $s_i
\in \{C, D\}$ for each player $i$.

First, we show $s^D = (D, \ldots, D)$ is a Nash
equilibrium. For any player $i$, their payoff from defecting is $U_i(D,
s^D_{-i})$, where $s^D_{-i}$ is the profile where
all other players also defect. If player $i$ unilaterally deviates to
cooperation, their payoff becomes $U_i(C, s^D_{-i})$. By the
dominance property (Condition 3), $U_i(D, s^D_{-i}) > U_i(C,
s^D_{-i})$. Therefore, no player has an incentive to deviate,
and $s^D$ is a Nash equilibrium.

Second, we show no other profile is a Nash equilibrium. Consider any
profile $s$ where at least one player $j$ cooperates. This
player's payoff is $U_j(C, s_{-j})$, where $s_{-j}$
is the profile of all other players. If player $j$ deviates to
defection, their payoff becomes $U_j(D, s_{-j})$. By the
dominance property (Condition 3), $U_j(D, s_{-j}) > U_j(C,
s_{-j})$. Therefore, player $j$ has a strict incentive to
deviate. Hence, no profile with any cooperation is a Nash equilibrium.

Therefore, $s^D$ is the unique Nash equilibrium.

\paragraphit{Claim (ii): Stability.}

Stability follows immediately from Claim (i). At the equilibrium
$s^D$, no player has unilateral incentive to deviate by the
dominance property. The equilibrium is thus stable against unilateral
deviations.

\paragraphit{Claim (iii): Pareto-Domination.}

By Condition 4 (Pareto-inefficiency), $U_i(s^C) >
U_i(s^D)$ for all players $i$. Therefore, universal
cooperation strictly Pareto-dominates the Nash equilibrium.

\paragraphit{Claim (iv): Persistence.}

Even though all players recognize that $s^C$ Pareto-dominates
$s^D$, they cannot coordinate to achieve it. By Condition 5,
structural barriers prevent enforcement:
\begin{itemize}
    \item If anarchy prevails, no authority can compel cooperation
    \item If verification is impossible, players cannot monitor compliance
    \item If monitoring is infeasible, detection of defection comes too late
    \item If commitment is problematic, promises to cooperate lack credibility
\end{itemize}

Without enforcement, any agreement to cooperate is unstable, as each
player has a strict incentive to defect (from Claim (i)). The folk
theorem, which allows for cooperation in repeated games, does not
apply because its required conditions (perfect monitoring, sufficient
patience, infinite repetition) fail.

Therefore the tragedy persists absent radical structural change.

\paragraph{Part II: Necessity.} We prove necessity by showing the
contrapositive: if any of Conditions 1-5 fails to hold, then the
tragedy structure breaks (at least one of claims (i)-(iv) fails). By
contrapositive logic, this establishes that each condition is
necessary for tragedy.

\paragraphit{$\neg$C1 (Removal of N-Player Structure):} If there is a
single decision-maker ($n = 1$) or a central authority with binding
power over all players, then there is no coordination problem. The
single player or central authority simply implements the Pareto-optimal
outcome $s^C$ directly. No strategic interaction exists, and no
tragedy arises. Therefore C1 is necessary.

\paragraphit{$\neg$C2 (Removal of Externalities):} If individual choices
create no externalities, each player's payoff depends only on their own
choice, not others' choices: $U_i(s_i, s_{-i}) =
U_i(s_i)$. Each player independently optimizes their own choice without
strategic consideration of others' actions. There is no coordination
problem because choices are independent. Strategic interaction
disappears, eliminating the coordination failure. Therefore C2 is
necessary.

\paragraphit{$\neg$C3 (Removal of Dominance):} If defection is not strictly
dominant, then there exist some player $i$ and some opponent profile
$s_{-i}$ such that $U_i(C, s_{-i}) \geq U_i(D,
s_{-i})$. In such cases, cooperation can be a best response
to others cooperating. This creates the possibility of multiple Nash
equilibria, including $s^C$ as a stable equilibrium. The
tragedy is no longer structurally inevitable. Therefore C3 is
necessary.

\paragraphit{$\neg$C4 (Removal of Pareto-Inefficiency):} If the equilibrium
$s^D$ is not Pareto-dominated by $s^C$, then for some player $i$ we
have $U_i(s^D) \geq U_i(s^C)$, or universal defection is
Pareto-optimal. Then the equilibrium outcome is efficient by
definition: there exists no alternative outcome that makes everyone
better off. Without Pareto-domination, there is no tragedy. The
outcome may be a coordination problem, but it is not recognizably
suboptimal. Therefore C4 is necessary.

\paragraphit{$\neg$C5 (Removal of Enforcement Difficulty):} Suppose
structural barriers to enforcement are absent. Specifically, suppose
any of the following holds:
\begin{itemize}
    \item A central authority exists with enforcement power, or
    \item Actions are sufficiently observable that defection can be reliably detected, or
    \item Monitoring is feasible at reasonable cost relative to coordination benefits, or
    \item Players can make credible commitments to future cooperation
\end{itemize}

Under these conditions, the folk theorem for repeated games applies:
cooperation can be sustained as a subgame perfect equilibrium through
trigger strategies because defection is detectable and punishable. For
the structural tragedy to hold, C5 must therefore imply barriers
sufficient to preclude these solutions: specifically, monitoring must
be too imperfect to detect defection, or patience too low for future
consequences to matter. Absent such barriers, the problem dissolves
from a structural tragedy to a solvable coordination
challenge. Therefore C5 is necessary.

\paragraph{Conclusion.} By demonstrating that the removal of any individual condition
($\neg$C1 through $\neg$C5) suffices to resolve the coordination
failure, we have established necessity via the
contrapositive. Combined with the sufficiency proved in Part I, this
confirms that Conditions 1–5 constitute the necessary and sufficient
structural requirements for the tragedy characterized in Theorem
\ref{thm:main}.

\end{proof}

This theorem formally distinguishes structural tragedies from solvable
coordination problems. By establishing that the tragedy persists if
and only if all five conditions hold, the result confirms that the
equilibrium is structurally resilient against standard
interventions. This provides precise diagnostic utility: any proposed
solution that fails to break at least one condition is mathematically
guaranteed to fail, while effective governance requires targeting the
specific structural barriers identified by the framework.

\subsection{Validation Across Canonical Cases}

We now validate this framework by applying it to established
coordination failures across multiple domains. Table
\ref{tab:validation} summarizes the analysis. We briefly discuss each
case, demonstrating how it satisfies all five conditions.

\begin{table}[t]
\centering
\caption{Five Conditions C1-C5 Across Canonical Coordination Failures,
  including the two new ones introduced in this paper.}
\label{tab:validation}
\begin{tabular}{lcccccc}
  \hline
  \textbf{Domain} & \textbf{C1} & \textbf{C2} & \textbf{C3} &
  \textbf{C4} & \textbf{C5} & \textbf{Reference}\\
\hline
Commons & \checkmark & \checkmark & \checkmark & \checkmark & \checkmark & \citet{hardin1968} \\
Security & \checkmark & \checkmark & \checkmark & \checkmark & \checkmark & \citet{jervis1978,mearsheimer2001} \\
Bank Runs & \checkmark & \checkmark & \checkmark & \checkmark & \checkmark & \citet{diamond1983} \\
Space Debris & \checkmark & \checkmark & \checkmark & \checkmark & \checkmark & \citet{adilov2015} \\
Antibiotics & \checkmark & \checkmark & \checkmark & \checkmark & \checkmark & \citet{foster2006} \\
Public Goods & \checkmark & \checkmark & \checkmark & \checkmark & \checkmark & \citet{olson1965} \\
Hockey Helmets & \checkmark & \checkmark & \checkmark & \checkmark & \checkmark & \citet{schelling1973} \\
Climate Change & \checkmark & \checkmark & \checkmark & \checkmark &
\checkmark & \citet{ostrom2010} \\
\hline
Productivity & \checkmark & \checkmark & \checkmark & \checkmark &
\checkmark & this paper \\
AI Governance & \checkmark & \checkmark & \checkmark & \checkmark &
\checkmark & this paper \\
\hline
\end{tabular}
\end{table}

\paragraph{Commons Tragedies}\hspace{-0.5em}\citep{hardin1968}\textbf{:} Multiple shepherds
(N-player) each decide whether to add sheep to common pasture (binary
choice creating negative externality through overgrazing). Each
shepherd benefits from adding sheep regardless of others' choices
(dominance driven by individual profit maximization). Yet universal
restraint would preserve the commons for all
(Pareto-domination). Absent property rights or enforceable agreements
(enforcement difficulty), the commons is depleted.

\paragraph{Security Dilemmas}\hspace{-0.5em}\citep{jervis1978,mearsheimer2001}\textbf{:} Multiple states
(N-player) each decide whether to arm or remain unarmed (binary choice
where one state's arms create security threats for others). Each state
benefits from arming for security regardless of others' choices
(dominance driven by survival concerns in anarchy). Yet universal
disarmament would provide security without arms costs
(Pareto-domination). International anarchy prevents enforceable
disarmament agreements (enforcement difficulty).

\paragraph{Bank Runs}\hspace{-0.5em}\citep{diamond1983}\textbf{:} Multiple depositors (N-player)
each decide whether to withdraw or keep deposits in a bank (binary
choice where withdrawals create panic). Each depositor benefits from
withdrawing if others might withdraw (dominance driven by fear of
losing deposits). Yet if no one panics, all maintain deposits safely
(Pareto-domination). Absence of instant coordination during panics
(enforcement difficulty) allows runs to occur. Deposit insurance
(FDIC) broke this tragedy (for deposits up to the insurance limit) by
eliminating existential stakes, removing Condition 3. Note that the
wholesale funding and equities markets are not covered under the FDIC
insurance so they may still be vulnerable to bank runs.

\paragraph{Space Debris}\hspace{-0.5em}\citep{adilov2015}\textbf{:} Multiple states and firms
(N-player) each decide whether to limit launches or expand operations
(binary choice where debris from launches threatens all
satellites). Each actor benefits from launches regardless of others'
choices (dominance driven by commercial and security advantages). Yet
universal restraint would preserve orbital space
(Pareto-domination). International space law lacks enforcement
mechanisms (enforcement difficulty).

\paragraph{Antibiotic Resistance}\hspace{-0.5em}\citep{foster2006}\textbf{:} Multiple actors
including doctors, patients, farmers (N-player) each decide whether to
limit or freely use antibiotics (binary choice where overuse breeds
resistance affecting all). Each actor benefits from use regardless of
others' choices (dominance driven by immediate health or productivity
gains). Yet universal restraint would preserve antibiotic
effectiveness (Pareto-domination). Distributed usage across multiple
sectors resists coordination (enforcement difficulty).

\paragraph{Public Goods}\hspace{-0.5em}\citep{olson1965}\textbf{:} Multiple individuals
(N-player) each decide whether to contribute to a public good
(Cooperate) or withhold contribution (Defect). Withholding allows an
individual to save resources (free-ride) while still enjoying any good
provided by others (externality). Each individual has a dominant
incentive to free-ride, as their personal contribution is costly and
has minimal impact on the total outcome (dominance). Yet, universal
contribution would fund the public good, making everyone better off
than if the good is not provided (Pareto-domination). Absent a
coercive mechanism like taxation (enforcement difficulty), the public
good is under-provided or not provided at all.

\paragraph{Hockey Helmets}\hspace{-0.5em}\citep{schelling1973}\textbf{:} Multiple players
(N-player) each decide whether to wear a helmet (Cooperate/Restrain)
or play without one (Defect/Advantage-seeking). Playing without a
helmet provides a slight competitive advantage (e.g., better vision,
`toughness') but increases the overall risk of injury for all
(negative externality). Each player benefits from the competitive edge
of not wearing one, regardless of others' choices (dominance). Yet,
universal helmet use would neutralize the competitive effects and make
everyone safer (Pareto-domination). Absent a league-wide, binding
mandate (enforcement difficulty), players are incentivized to play
without helmets despite the collective danger.

\paragraph{Climate Change}\hspace{-0.5em}\citep{ostrom2010}\textbf{:} Multiple nations and firms
(N-player) each decide whether to reduce emissions (Cooperate) or
pollute freely (Defect). Polluting provides immediate economic
advantages while imposing global climate damage (negative
externality). Each actor benefits from polluting to maintain economic
competitiveness, regardless of others' choices (dominance). Yet,
universal restraint would prevent catastrophic climate costs, making
all better off (Pareto-domination). International anarchy and
sovereignty concerns prevent binding, enforceable agreements
(enforcement difficulty).

\paragraph{Result:} Perfect verification across all canonical cases. Each
tragedy satisfies the complete set of conditions. This suggests the
framework captures the essential structure that creates coordination
failures.

Notably, cases that do not satisfy all conditions either are not
tragedies or have been successfully resolved. For example, bank runs
violated Condition 3 (dominance) when deposit insurance eliminated
existential stakes (up to the insurance limit) since depositors no
longer benefit from withdrawing if others withdraw. This structural
change broke the tragedy, validating our framework's diagnostic power.

\subsection{Implications of the Framework}

This unified framework provides several analytical advantages:

\textbf{Identification:} We can identify new tragedies by checking
whether a situation satisfies the framework. If it does, coordination
will be structurally difficult. If any condition fails, standard
solutions may suffice.

\textbf{Prediction:} We can predict difficulty of coordination by
examining the intensity with which each condition is
satisfied. Situations with extreme dominance (high existential stakes)
or severe enforcement problems will resist coordination more strongly
than situations where conditions are weakly satisfied.

\textbf{Intervention Design:} We can design interventions by targeting
specific conditions. Breaking any single condition may resolve the
tragedy.

\subsection{Extension: Condition Intensity and Tragedy Severity}

While the framework treats conditions as binary (present or absent)
for analytical clarity, each condition varies in intensity in
real-world applications. The severity of a structural tragedy
increases with the intensity of each condition. This intensity
dimension provides additional predictive and diagnostic power by
distinguishing between manageable nuisances and existential threats.

\subsubsection{C1: N-Player Structure - Number and Heterogeneity of Actors.}

\paragraphit{Intensity spectrum:}
\begin{itemize}
    \item \textit{Low intensity:} Few actors with stable identities 
      (e.g., small groups like hockey leagues, bilateral security 
      relationships)
    \item \textit{Medium intensity:} Moderate number of organized 
      actors (e.g., dozens of states or firms in defined sectors like 
      commons governance, public goods provision)
    \item \textit{High intensity:} Large number of heterogeneous actors 
      across sectors (e.g., hundreds of nations plus countless firms in 
      climate change, global productivity competition)
    \item \textit{Extreme intensity:} Massive number of individual-level 
      decisions across decentralized populations (e.g., millions of 
      depositors in bank runs, billions of patients and prescribers in 
      antibiotics resistance)
\end{itemize}

\paragraphit{Impact on tragedy severity:} More actors make coordination
exponentially more difficult. Each actor's individual contribution to
the collective good diminishes as $n$ increases, while the temptation
to defect remains constant. Additionally, heterogeneity among actors
(different sizes, capabilities, preferences) further intensifies
coordination difficulty.

\subsubsection{C2: Externalities - Magnitude and Scope.}

\paragraphit{Intensity spectrum:}
\begin{itemize}
    \item \textit{Low intensity:} Small, localized negative externalities 
      with delayed or uncertain effects (e.g., individual free-riding on 
      public goods, minor competitive disadvantages in hockey helmets)
    \item \textit{Medium intensity:} Moderate externalities with gradual 
      accumulation (e.g., commons depletion over years, antibiotic 
      resistance developing gradually)
    \item \textit{High intensity:} Large, immediate externalities with 
      rapid transmission (e.g., financial panic spreading within days, 
      climate damages affecting all actors, AI capability advantages 
      immediately threatening competitors)
    \item \textit{Extreme intensity:} Anticipatory externalities where 
      actors respond to perceived threats before they materialize, creating 
      self-reinforcing spirals (e.g., security dilemmas where intelligence 
      about potential programs triggers immediate counter-programs)
\end{itemize}

\paragraphit{Impact on tragedy severity:} Larger externalities create
stronger feedback loops. When one actor's defection severely harms
others, it triggers more aggressive competitive responses,
accelerating the race to the bottom. The speed and magnitude of
externality transmission determines how rapidly the tragedy unfolds.

\subsubsection{C3: Dominance - Strength of Existential Stakes.}

\paragraphit{Intensity spectrum:}
\begin{itemize}
    \item \textit{Low intensity:} Defection provides modest advantage; 
      cooperation causes minor competitive disadvantage (e.g., visibility 
      or comfort in hockey helmets, reduced returns in public goods)
    \item \textit{Medium intensity:} Defection provides significant 
      competitive advantage; cooperation causes substantial market share 
      loss but not immediate elimination (e.g., productivity competition 
      where firms can survive with reduced hours, commons where individuals 
      can survive with restraint, climate where nations face competitive 
      disadvantage but can adapt)
    \item \textit{High intensity:} Defection necessary to avoid severe 
      consequences; cooperation risks major losses (e.g., state conquest 
      in security dilemmas, losing all deposits in bank runs)
    \item \textit{Extreme intensity:} Defection necessary to avoid 
      permanent strategic obsolescence; restraint means irreversible loss 
      with no possibility of recovery (e.g., transformative AI where 
      falling behind creates permanent winner-take-all dynamics)
\end{itemize}

\paragraphit{Impact on tragedy severity:} The intensity of existential
stakes determines how strongly dominance holds. When cooperation means
death or bankruptcy, actors will defect regardless of collective
costs. Lower existential stakes create space for partial coordination
or voluntary restraint.

\subsubsection{C4: Pareto-Inefficiency - Magnitude of Collective Loss.}

\paragraphit{Intensity spectrum:}
\begin{itemize}
    \item \textit{Low intensity:} Universal cooperation slightly better 
      than universal defection; modest welfare gains available (e.g., 
      marginal safety improvements from universal helmet use)
    \item \textit{Medium intensity:} Universal cooperation moderately 
      better; significant quality-of-life improvements but society continues 
      functioning (e.g., commons preservation, public goods provision, 
      productivity welfare gains, avoiding bank runs, maintaining antibiotic 
      effectiveness)
    \item \textit{High intensity:} Universal cooperation vastly superior; 
      defection leads to major catastrophic outcomes with potential for 
      recovery (e.g., regional conflicts, economic depressions)
    \item \textit{Extreme intensity:} Universal cooperation necessary to 
      avoid irreversible catastrophic outcomes with permanent losses (e.g., 
      human extinction from nuclear war or AI misalignment, irreversible 
      climate tipping points causing ecosystem collapse)
\end{itemize}

\paragraphit{Impact on tragedy severity:} Greater Pareto-inefficiency means
larger collective losses from coordination failure. This affects both
the normative urgency (how important it is to solve) and practical
difficulty (actors may invest more in solving severe
tragedies). However, high inefficiency alone doesn't make coordination
easier; it only makes failure more costly.

\subsubsection{C5: Enforcement Difficulty - Severity of Structural Barriers.}

\paragraphit{Intensity spectrum:}
\begin{itemize}
    \item \textit{Low intensity:} Enforcement absent or weak in current 
      form but structurally feasible; clear authority could monitor and 
      sanction violations with institutional development
    \item \textit{Medium intensity:} Enforcement difficult but possible 
      with institutional investment; monitoring costly but achievable; 
      commitment requires sustained effort (e.g., league rules for helmets, 
      taxation for public goods)
    \item \textit{High intensity:} Enforcement very difficult due to 
      international anarchy or distributed nature; verification problematic; 
      commitment fragile; sanctions difficult but not impossible (e.g., 
      commons without property rights, international labor standards, 
      nuclear non-proliferation, climate agreements, bank run coordination)
    \item \textit{Extreme intensity:} Enforcement structurally impossible 
      due to fundamental technical barriers (perfect hiding, instant copying, 
      inseparable dual-use) combined with international anarchy and no 
      feasible global authority (e.g., AI development where software is 
      copyable, training concealable, capabilities unverifiable)
\end{itemize}

\paragraphit{Impact on tragedy severity:} The impossibility of enforcement
determines whether the tragedy is merely difficult or truly
structural. When enforcement is difficult but possible (low to medium
intensity), sufficient institutional investment can achieve
coordination. When enforcement is structurally impossible (extreme
intensity), no amount of effort suffices without changing the
underlying structure.

\paragraphit{Subcategories of enforcement difficulty:}
\begin{itemize}
    \item \textit{Anarchy intensity:} No authority $\rightarrow$ weak
      authority $\rightarrow$ strong authority
    \item \textit{Verification impossibility:} Perfect hiding
      $\rightarrow$ difficult detection $\rightarrow$ easy monitoring
    \item \textit{Commitment problems:} No credible commitment
      $\rightarrow$ costly commitment $\rightarrow$ binding commitment
\end{itemize}

\subsection{Implications of Intensity Analysis}

This intensity framework provides several analytical advantages:

\paragraph{Comparative Prediction:} We can predict relative difficulty of
coordination across domains by comparing condition intensities. AI
governance (for transformative breakthroughs) faces extreme intensity 
on three conditions simultaneously (C3, C4, C5), while severe historical 
cases like nuclear security dilemmas faced extreme intensity on two 
conditions (C2, C4) and climate change on one (C4). Current AI 
capabilities competition exhibits high but not extreme intensity, as 
evidenced by multiple viable competitors despite capability gaps.

\paragraph{Intervention Prioritization:} Interventions should target the
highest-intensity conditions. For AI, the extreme intensity of C5
(enforcement impossibility due to verification) suggests compute
governance (making verification partially possible) may be most
promising despite remaining severe challenges.

\paragraph{Partial Coordination Explanation:} The European case shows
lower intensity on some conditions (reduced existential stakes through
wealthy welfare states, supranational governance reducing anarchy)
enables partial coordination despite high intensity on others (global
competition, verification difficulty).

\paragraph{Temporal Variation Explanation:} Condition intensities change
over time. Productivity competition shows moderate base intensity 
(C2, C3 at medium due to gradual market dynamics and non-existential 
stakes), but globalization after 1980 increased N (more actors), 
strengthened externalities (faster competitive transmission), and 
weakened enforcement (capital mobility), making even this moderate 
tragedy harder to address. Climate change intensified as C4 shifted 
from high to extreme with growing evidence of irreversible tipping 
points.

\paragraph{Falsification Refinement:} The framework predicts that
reducing condition intensity should reduce tragedy severity
proportionally. Evidence of coordination success without reducing any
condition's intensity would falsify the framework. The European case
confirms: partial reduction in intensity (some conditions) produces
partial coordination (some dimensions).

\subsection{Tragedy Index: Intensity Quantification}
\label{sec:tragedy-index}

The intensity framework adds a continuous dimension to our binary
framework, enabling more nuanced predictions while maintaining the
core insight that all necessary and sufficient conditions must be
satisfied (at non-zero intensity) for structural tragedy to occur.

The intensity framework also allows for two analytical advances:
\begin{itemize}
    \item \textbf{Structural Fingerprinting:} As illustrated in Figure
      \ref{fig:radar_chart}, plotting intensities on a radar chart
      allows us to visualize the `shape' of the coordination
      failure. This visualization reveals that AI Governance
      effectively `envelopes' historical cases like Nuclear
      Security. While Nuclear Security spikes on specific dimensions
      (Externalities and Pareto-Inefficiency), AI Governance maintains
      high or extreme intensity across all five dimensions, creating a
      unique and formidable structural profile.

    \begin{figure}[htbp]
    \centering
    \includegraphics[width=0.50\textwidth]{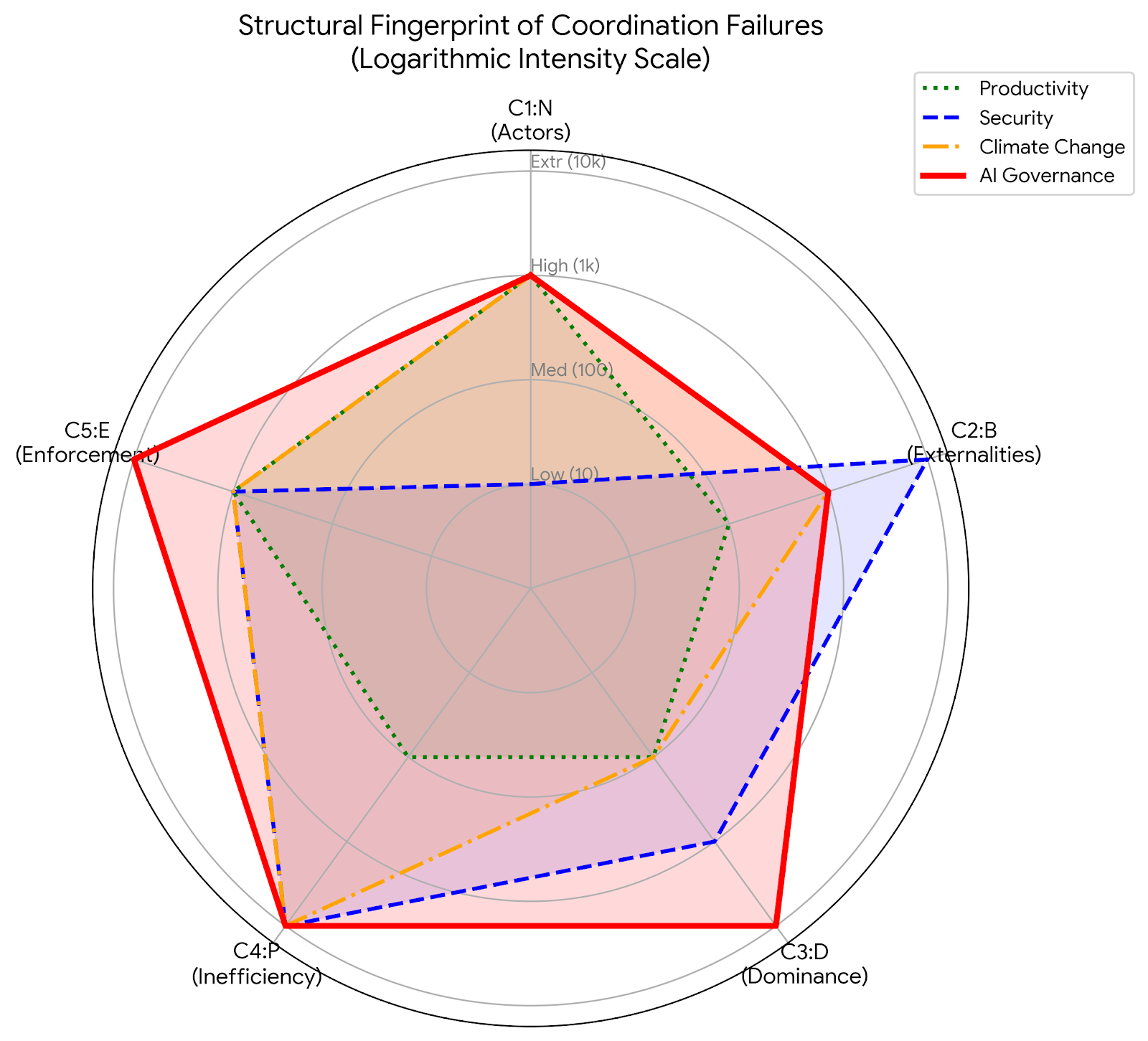}
    \caption{\textbf{The Structural Fingerprint of AI Governance.}
      Comparison of condition intensities for four selected domains
      (logarithmic scale), chosen to illustrate diverse structural
      profiles and highlight AI Governance's unique fingerprint. Note
      how AI Governance (red) envelopes other tragedies, exhibiting
      `Extreme' intensity on three dimensions simultaneously (C3,
      C4, C5), whereas historical cases like Security or Climate
      Change spike only on specific dimensions. Overlapping areas
      (blended colors) indicate shared structural challenges.}
    \label{fig:radar_chart}
    \end{figure}
    
  \item \textbf{The Tragedy Index:} We propose a quantitative metric
    to compare severity across domains, acknowledging that
    coordination difficulty plausibly scales non-linearly with
    condition intensity.
\end{itemize}

We propose the Tragedy Index ($\iota$) as a heuristic quantification 
of coordination difficulty. Two methodological choices deserve 
acknowledgment:

\paragraph{Exponential Scaling:} We use base-10 exponential intensities
($I_c \in \{0, 10, 10^2, 10^3, 10^4\}$) because coordination
difficulty plausibly scales non-linearly: a tragedy with `Extreme'
conditions on multiple dimensions may be orders of magnitude harder
than one with `low' conditions, not merely several times harder.
The base-10 choice is somewhat arbitrary but enables intuitive
order-of-magnitude interpretation.

\paragraph{Geometric Aggregation:} We combine intensities via geometric 
mean (the 5th root of the product):
\[
\iota = \left( \prod_{c=1}^{5} I_c \right)^{\frac{1}{5}}
\]
where $I_c$ corresponds to Solved/Absent (0), Low (10), Medium (100),
High (1,000), or Extreme (10,000).

We employ the geometric mean because the five conditions are
necessary: the tragedy structure requires the conjoint presence of all
factors. Mathematically, necessity implies logical conjunction, which
corresponds to multiplication rather than addition. If any condition
intensity falls to zero, the index correctly yields zero: no tragedy
exists. An arithmetic mean would incorrectly assign positive severity
when a necessary condition is absent.

Additionally, the nth-root operation penalizes extreme conditions
heavily: a single extreme-intensity condition (10,000) contributes
less to the index than moderate intensity across multiple conditions,
reflecting the disproportionate coordination difficulty when several
conditions bind simultaneously. Alternative aggregations such as
logarithmic sum would yield different absolute values but preserve
ordinal rankings across domains.

The value of this index lies not in precise quantification but in 
systematic comparison. We interpret the resulting values as ``Tragedy 
Magnitude'':
\begin{itemize}
\item Magnitude 0: Solved (Stable)
\item Magnitude 10--100: Manageable Nuisance (Chronic)
\item Magnitude 100--1,000: Systemic Crisis (Acute)
\item Magnitude 1,000+: Existential/Terminal Threat
\end{itemize}

As shown in Table \ref{tab:intensity-comparison}, under this heuristic
measure, AI Governance ($\iota = 3{,}981$) faces orders-of-magnitude
greater coordination difficulty than other severe tragedies. Climate
change, bank runs, and nuclear security dilemmas all face catastrophic
stakes ($\iota = 1{,}000$ each), but differ in their condition
profiles: security achieves this through anticipatory dynamics and
extinction-level risks despite few actors; climate through
irreversibility; bank runs through mass-scale coordination failure.
Productivity competition ($\iota = 251$) and antibiotics resistance
($\iota = 398$), despite their global scale and persistence, face
lower intensity due to non-existential stakes and gradual
dynamics. Manageable cases like public goods ($\iota = 40$) have been
successfully addressed through institutional solutions. This
quantification reinforces the qualitative analysis: AI governance
presents an unprecedented coordination challenge.

\begin{table}[t]
\centering
\caption{Intensity Comparison of Conditions C1-C5 Across Domains,
  including the two new ones introduced in this paper. Intensities:
  L=Low (10), M=Medium (100), H=High (1,000), E=Extreme
  (10,000).\textsuperscript{a}}
\label{tab:intensity-comparison}
\begin{tabular}{lcccccc}
\toprule
\textbf{Domain} & \textbf{C1} & \textbf{C2} & \textbf{C3} & \textbf{C4} & \textbf{C5} & \textbf{Index ($\iota$)} \\
\midrule
Commons & M & M & M & M & M & \textbf{100} \\
Security\textsuperscript{b} & L & E & H & E & H & \textbf{1,000} \\
Bank Runs & E & H & H & M & H & \textbf{1,000} \\
Space Debris & M & M & M & M & H & \textbf{158} \\
Antibiotics & E & M & M & M & H & \textbf{398} \\
Public Goods & M & L & L & M & M & \textbf{40} \\
Hockey Helmets & L & L & L & L & M & \textbf{16} \\
Climate Change & H & H & M & E & H & \textbf{1,000} \\
\midrule
Productivity & H & M & M & M & H & \textbf{251} \\
\textbf{AI Governance}\textsuperscript{c} & \textbf{H} & \textbf{H} & \textbf{E} &
\textbf{E} & \textbf{E} & \textbf{3,981} \\
\bottomrule
\end{tabular}

\vspace{0.5em} \footnotesize \textsuperscript{a}Table reflects
intensity ratings after systematic review of condition specifications.

\textsuperscript{b}Security dilemmas rated for nuclear-age scenarios
where C2 reflects anticipatory responses to perceived threats and C4
reflects extinction-level risks.

\textsuperscript{c}AI Governance ratings reflect transformative
breakthroughs (AGI/superintelligence). Current AI capabilities
competition would rate approximately (H, H, H, H, E) with Index
$\approx$ 1,000, similar to climate change and bank runs, as multiple
competitors survive capability gaps.
\end{table}

We now apply the basic framework (without the intensity extension) to
productivity competition and AI development, demonstrating that both
satisfy all necessary and sufficient conditions, and that both are
structural tragedies.

\subsection{Alternative Formalization: Game-Theoretic Measures}
\label{sec:game-theoretic-measures}

While the Tragedy Index provides intuitive comparison across domains
through structural condition intensities, formal game-theoretic
measures offer an alternative approach to quantifying tragedy
severity. These measures focus on the payoff structure (Conditions 3
and 4) and provide mathematical precision about welfare losses and
incentive misalignments.

Consider an N-player game satisfying Conditions 1-5. Let $U_{coop}$
denote global welfare under universal cooperation (profile $s^C$),
$U_{nash}$ denote global welfare at the Nash equilibrium (profile
$s^D$), and for symmetric games, let $T$ be the temptation payoff
(defect while others cooperate), $R$ be the reward payoff (mutual
cooperation), $P$ be the punishment payoff (mutual defection), and $S$
be the sucker payoff (cooperate while others defect).  Structural
tragedies satisfy the prisoner's dilemma ordering: $T > R > P > S$.

\paragraph{The Normalized Rationality Gap:} \citep{nisan2007} measures the
fraction of potential welfare lost due to coordination failure:
\begin{equation}
\Delta_{norm} = \frac{U_{coop} - U_{nash}}{U_{coop}} = 1 - \frac{U_{nash}}{U_{coop}}
\end{equation}
This takes values in $[0, 1]$, where $\Delta_{norm} \approx 0$
indicates minimal tragedy and $\Delta_{norm} \to 1$ indicates
catastrophic welfare loss. For symmetric games, this simplifies to
$\Delta_{norm} = (R - P)/R$.

\paragraph{The Price of Anarchy:} \citep{koutsoupias1999} measures the
ratio between optimal social welfare and worst-case equilibrium
welfare:
\begin{equation}
\text{PoA} = \frac{U_{coop}}{U_{nash}} = \frac{R}{P}
\end{equation}
This multiplicative formulation naturally captures orders-of-magnitude
differences: PoA $\approx$ 1 indicates near-efficiency, PoA $\in [10,
  100]$ indicates severe inefficiency, and PoA $\to \infty$ indicates
existential tragedy where equilibrium approaches zero value.

\paragraph{The Temptation-Cooperation Ratio:} captures the tension between
individual incentives and collective benefits \citep{rapoport1965}:
\begin{equation}
\tau = \frac{T - R}{R - P}
\end{equation}
where $T$ is the temptation payoff (defect while others
cooperate). This ratio measures how much individual gain from
defection exceeds collective gain from cooperation: $\tau < 1$
indicates coordination is structurally feasible, $\tau \approx 1$--5
indicates substantial coordination difficulty, and $\tau \to \infty$
indicates existential stakes make defection imperative.

These measures require detailed payoff specifications for each
domain. For productivity competition, our formalization in Section
\ref{sec:productivity} provides the foundation for such
calculations. For AI governance, security dilemmas, and climate
change, comprehensive payoff structures would require modeling
uncertainty about catastrophic outcomes, discount rates, and long-term
value functions---substantial undertakings beyond our current scope.

The Tragedy Index (Subsection \ref{sec:tragedy-index}) and
game-theoretic measures serve complementary purposes: the Index
enables systematic comparison across diverse domains through
structural condition intensities without requiring precise payoff
quantification, while game-theoretic measures provide rigorous welfare
analysis when detailed payoff structures can be reliably
estimated. Future research could apply these formal measures
systematically across domains to validate and refine the Index's
ordinal rankings.

\section{The Tragedy of Productivity}
\label{sec:productivity}

We now apply the unified framework to productivity competition,
demonstrating that all five conditions are satisfied. This explains
structurally why Keynes's prediction failed: not from cultural
preferences or policy failures, but from coordination barriers
inherent in competitive markets.

Although worker welfare includes wages and benefits in addition to
working hours, we will use working hours as a proxy for worker welfare
as well as to simplify the formal analysis. Our welfare analysis
focuses on firms and workers as the direct participants in labor
market coordination. Productivity gains may generate broader societal
effects (e.g., consumer surplus from lower prices), but these lie
outside our scope; we note that workers and consumers substantially
overlap, making the welfare comparison largely internal to the same
population.

\subsection{Mapping the Five Conditions to Firm Competition}
\label{sec:productivity-mapping}

\subsubsection{C1: N-Player Structure.}

Productivity competition involves multiple independent decision-makers:
\begin{itemize}
    \item Multiple firms compete in domestic and global markets
    \item National governments face political pressure from firms and workers
    \item No single actor can coordinate welfare improvements unilaterally
\end{itemize}

The decision architecture is fundamentally decentralized.

\subsubsection{C2: Binary Choice with Externalities.}

When productivity increases, each firm faces essentially a binary choice:
\begin{itemize}
    \item Strategy $C$ (Cooperate/Restrain): Reduce working hours
      proportionally to productivity gains while maintaining output
    \item Strategy $D$ (Defect/Expand): Maintain or increase working
      hours to expand output and market share
\end{itemize}

Individual choices create externalities. When one firm expands output
(defects), it increases competitive pressure on all rivals:
\begin{itemize}
    \item Expanded output captures market share from competitors
    \item Increased competition reduces rivals' profit margins
    \item Rivals face pressure to expand output to avoid displacement
\end{itemize}

The externality is negative: one firm's defection harms others by
intensifying competition.

\subsubsection{C3: Dominance Property.}

Defection (output expansion) dominates cooperation (welfare
improvements) regardless of competitors' choices:

\textit{Case 1—If competitors reduce hours:} Firm gains decisive
market share advantage by maintaining output. Competitors who reduce
hours produce less, creating market opportunity. The expanding firm
captures this market share, increasing profits and market position.

\textit{Case 2—If competitors expand output:} Firm must expand to
avoid displacement. Competitors' expansion increases supply and
reduces prices. Reducing hours while competitors expand means
producing less in a more competitive market—a recipe for bankruptcy.

This dominance is often driven by existential stakes. In the specific 
context of market competition, this is driven by the threat of insolvency. 
If a firm maintains high-cost working conditions while competitors expand 
output and lower prices, it risks losing market share to the point of 
bankruptcy. The survival instinct of the firm forces it to match the most 
aggressive competitor. As \citet{alchian1950uncertainty} famously argued,
market competition functions as an evolutionary selection mechanism;
firms that fail to maximize profit risk elimination.

Historical evidence confirms firms consistently choose expansion over
work reduction when facing competitive pressure. Manufacturing plants
consistently respond to productivity improvements by expanding output
rather than reducing labor input \citep{baily1992productivity}, and
even during periods of declining work hours (1890s-1940s), reductions
came primarily through labor movement pressure and legislation rather
than voluntary firm decisions \citep{costa2000wage}. The overwork
pattern \citet{schor1992} documented reflects this systematic
preference for expansion over welfare improvements.

\subsubsection{C4: Pareto-Inefficiency.}

Universal welfare improvements Pareto-dominate universal output expansion:

If all firms reduce hours proportionally to productivity gains:
\begin{itemize}
    \item Workers gain leisure time without income loss
    \item Firms maintain relative market positions (no competitive disadvantage)
    \item Total output remains sufficient for material abundance
    \item Environmental externalities may decrease with reduced production
\end{itemize}

Compare this to mutual expansion (the equilibrium):
\begin{itemize}
    \item Workers work longer hours despite productivity gains
    \item Firms engage in costly competition, reducing profit margins
    \item Excess output may exceed demand, requiring artificial stimulus
    \item Negative externalities from production increase
\end{itemize}

The equilibrium is recognizably suboptimal. This is precisely what
Keynes observed, i.e., the capacity for leisure exists but remains
unrealized. Workers could work less, firms could coordinate restraint,
yet competition drives continued expansion.

\subsubsection{C5: Enforcement Difficulty.}

Multiple structural barriers prevent coordination:

\textit{International Anarchy:} No global authority can enforce
coordinated work reduction. National regulations face race-to-bottom
dynamics as firms relocate to jurisdictions with longer hours or fewer
restrictions. Capital mobility undermines any single nation's ability
to impose work reductions that competitors don't match.

\textit{Verification Challenges:} Even with agreements, compliance is
difficult to verify. Firms can disguise overtime, create parallel
structures, or use contractors. Workers facing job insecurity may
voluntarily work longer hours to demonstrate commitment. Actual
working hours may diverge substantially from official regulations.

\textit{Monitoring Infeasibility:} The distributed nature of modern
work, such as remote work, knowledge work, and, always-on digital
connectivity, makes monitoring difficult. When work becomes cognitive
rather than physical, and occurs outside traditional workplace
boundaries, enforcement costs escalate.

\textit{Commitment Problems:} Even if firms agree to coordinate,
commitments lack credibility. Each firm has overwhelming incentive to
defect once others reduce hours. Without binding enforcement, any
agreement is unstable. First-mover disadvantage discourages unilateral
action—the firm that reduces hours first suffers competitive
disadvantage if others don't follow.

\subsubsection{Result --- Structural Tragedy Confirmed:}

Productivity competition satisfies all five conditions. Therefore, by
Theorem \ref{thm:main}, universal output expansion (long working
hours) is the unique Nash equilibrium despite being Pareto-dominated
by universal work reduction. This explains Keynes's failed prediction
structurally.

\subsection{Formalization: The N-Firm Game}

We now formalize the heuristic arguments from the previous section to
show that the tragedy structure can be derived from first principles
under standard assumptions.

Consider $n$ firms indexed by $i \in \{1, \ldots, n\}$ competing in a
market. Firms are heterogeneous: firm $i$ has initial output $Q_{i}$
and working hours $H_{i}$ such that productivity is output per
hour. We assume that working hours do not exceed the physiological
threshold where marginal productivity turns negative.

We define two separate utility functions: one for the firm ($U_F$),
driven by profit, and one for the workers ($U_W$), driven by welfare.

\begin{itemize}
    \item \textbf{Firm Utility ($U_F$):} This is pure profit:
      $\text{Revenue} - \text{Labor Cost}$, after ignoring other
      typical costs such as taxes and costs that may arise in mutual
      defection state such as additional advertising costs and price
      wars.
    \begin{itemize}
        \item $\pi$ is the total market-wide profit.
        \item $c$ is the cost per hour of labor.
    \end{itemize}
    \item \textbf{Worker Utility ($U_W$):} This is welfare: $\text{Wage} + \text{Leisure Value}$.
    \begin{itemize}
        \item We assume wages are a component of $c$ and are maintained in both scenarios.
        \item $\beta$ is the monetized utility value of one hour of
          leisure. (This is based on Akerlof's partial gift exchange
          model \citep{akerlof1982}).
    \end{itemize}
\end{itemize}

Both utility equations above ignore other components that stay
constant between strategy changes.

When productivity increases by factor $\alpha > 1$, each firm chooses
between two strategies, $s_i \in \{C, D\}$:
\begin{itemize}
    \item $C$ (Cooperate): Reduce hours proportionally to $H_{i}/\alpha$, maintaining output $Q_{i}$
    \item $D$ (Defect): Maintain hours $H_{i}$, expanding output to $\alpha Q_{i}$
\end{itemize}

\paragraph{Payoff Structure:}

The utility for firm $i$ depends on its own strategy $s_i \in \{C,
D\}$ and the strategy profile of its competitors,
$s_{-i}$. In this specific market game, the impact of
$s_{-i}$ is fully captured by the total output of all other
firms, which we denote as $Q_{-i}(s_{-i})$.

The utility functions for firm $i$ and a worker in the firm are therefore:
\begin{itemize}
    \item \textbf{If Cooperating ($s_i = C$):} Output is $Q_{i}$, Hours are $H_{i}/\alpha$.
        \begin{itemize}
            \item $U_{F,i}(C, s_{-i}) = \pi \cdot \frac{Q_{i}}{Q_{i} + Q_{-i}(s_{-i})} - c \cdot \frac{H_{i}}{\alpha}$
            \item $U_{W,i}(C) = \text{Wage} + \beta \cdot \frac{(\alpha-1)H_{i}}{\alpha}$
        \end{itemize}
    \item \textbf{If Defecting ($s_i = D$):} Output is $\alpha Q_{i}$, Hours are $H_{i}$.
        \begin{itemize}
            \item $U_{F,i}(D, s_{-i}) = \pi \cdot \frac{\alpha Q_{i}}{\alpha Q_{i} + Q_{-i}(s_{-i})} - c \cdot H_{i}$
            \item $U_{W,i}(D) = \text{Wage}$
        \end{itemize}
\end{itemize}

\begin{proposition}[Dominance in Productivity Competition]\label{prop:dominance}
For any firm $i$, competitor strategy profile $s_{-i}$ with total
competitor output $Q_{-i}(s_{-i}) > 0$, and productivity factor
$\alpha > 1$:
\begin{enumerate}
    \item[(i)] Defection yields strictly higher market share than cooperation. 
    \item[(ii)] Defection is the dominant strategy, i.e., $U_i(D,
      s_{-i}) > U_i(C, s_{-i})$ for all $s_{-i}$, provided the
      marginal profit from captured market share exceeds the marginal
      value of foregone leisure.
\end{enumerate}
\end{proposition}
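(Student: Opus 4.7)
The plan is to verify both parts directly from the payoff formulas, treating Part (i) as a purely algebraic comparison of the two market-share expressions and Part (ii) as a decomposition of the utility gap into a revenue gain and a labor-cost differential, with the ``provided'' clause delivering exactly the sign needed for dominance. Throughout I will work with the firm utility $U_{F,i}$ (since the firm is the decision-maker in this game) and assume $Q_i, Q_{-i}(s_{-i}), \pi, c, H_i > 0$ and $\alpha > 1$ as given in the setup.

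For Part (i), I will form the difference
\[
\frac{\alpha Q_i}{\alpha Q_i + Q_{-i}} - \frac{Q_i}{Q_i + Q_{-i}}
= \frac{\alpha Q_i(Q_i + Q_{-i}) - Q_i(\alpha Q_i + Q_{-i})}{(\alpha Q_i + Q_{-i})(Q_i + Q_{-i})}
= \frac{(\alpha-1)\, Q_i\, Q_{-i}}{(\alpha Q_i + Q_{-i})(Q_i + Q_{-i})},
\]
and observe that every factor in the numerator and denominator is strictly positive under the stated hypotheses. This gives (i) immediately. I will also flag the degenerate limit $Q_{-i} \to 0$: the firm is effectively a monopoly and defection merely scales its share from $1$ to $1$, so the strict inequality requires the ``$Q_{-i}(s_{-i}) > 0$'' hypothesis that the proposition carries.

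For Part (ii), I will compute the firm's utility gap by subtracting the two payoff lines in the model specification and grouping the revenue and labor-cost contributions. Using the identity from Part (i), this yields
\[
U_{F,i}(D, s_{-i}) - U_{F,i}(C, s_{-i})
= \underbrace{\pi \cdot \frac{(\alpha-1)\, Q_i\, Q_{-i}}{(\alpha Q_i + Q_{-i})(Q_i + Q_{-i})}}_{\text{marginal profit from captured share}} \; - \; \underbrace{c \cdot \frac{(\alpha-1) H_i}{\alpha}}_{\text{marginal cost of maintained hours}}.
\]
The second term is precisely the ``value of the foregone leisure'' from the firm's balance sheet: the extra wage bill that cooperation would have eliminated by cutting $H_i$ to $H_i/\alpha$. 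The stated precondition therefore rewrites as ``the first bracket exceeds the second,'' which is exactly the dominance inequality $U_{F,i}(D, s_{-i}) > U_{F,i}(C, s_{-i})$. I will then note that linking this back to Condition~C3 in Theorem~\ref{thm:main} completes the mapping of productivity competition into the general tragedy framework.

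The main obstacle is interpretive rather than algebraic: the ``for all $s_{-i}$'' quantifier and the ``provided'' clause sit in tension, because the market-share gain depends on $Q_{-i}(s_{-i})$ and vanishes at both $Q_{-i} \to 0$ (monopoly) and $Q_{-i} \to \infty$ (negligible firm), while the labor-cost differential is fixed. I will therefore be explicit that the precondition is to be read pointwise in $s_{-i}$, so that dominance is strict on the set of competitor profiles for which the revenue-gain term exceeds the constant cost term. I will also note that the set of $s_{-i}$ where the condition fails corresponds to extreme market asymmetries outside the regime the model is meant to describe, and that for a broad range of interior configurations the inequality holds uniformly, which is what the subsequent Nash equilibrium argument via Theorem~\ref{thm:main} requires.
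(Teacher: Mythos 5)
Your proposal is correct and follows essentially the same route as the paper's proof: a direct computation of $U_{F,i}(D,s_{-i})-U_{F,i}(C,s_{-i})$ decomposed into the market-share (revenue) gain minus the labor-cost increase $c\,(\alpha-1)H_i/\alpha$, with the ``provided'' clause supplying exactly the sign needed for dominance. You are in fact slightly more explicit than the paper, which asserts part (i) without the closed-form $(\alpha-1)Q_iQ_{-i}/\bigl((\alpha Q_i+Q_{-i})(Q_i+Q_{-i})\bigr)$ and justifies the proviso empirically rather than addressing the pointwise-in-$s_{-i}$ reading you flag, but these are refinements of the same argument, not a different one.
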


\begin{proof}
(i) The proof that market share for Defecting is strictly greater is
  unchanged and holds by $\alpha > 1$.  (ii) We find the net
  \textit{profit} gain for the firm from defecting:
\begin{align*}
U_{F,i}(D, s_{-i}) - U_{F,i}(C, s_{-i}) &= \left( \pi \cdot
\frac{\alpha Q_{i}}{\alpha Q_{i} + Q_{-i}(s_{-i})} - c \cdot H_{i}
\right) \\ &\quad - \left( \pi \cdot \frac{Q_{i}}{Q_{i} +
  Q_{-i}(s_{-i})} - c \cdot \frac{H_{i}}{\alpha} \right)
\end{align*}
Grouping terms:
\begin{align*}
U_{F,i}(D) - U_{F,i}(C) &= \underbrace{\pi \left(\frac{\alpha
      Q_{i}}{\alpha Q_{i} + Q_{-i}(s_{-i})} - \frac{Q_{i}}{Q_{i} +
      Q_{-i}(s_{-i})}\right)}_{\text{Market Share Gain (Always
    Positive)}} \\ &\quad - \underbrace{\left( c \cdot H_{i} - c
  \cdot \frac{H_{i}}{\alpha} \right)}_{\text{Labor Cost Increase (Always Positive)}}
\end{align*}
Simplifying the cost term:
$$U_{F,i}(D) - U_{F,i}(C) = \text{Market Share Gain} - \text{Labor
  Cost Increase}$$ Defection is the dominant strategy (C3 is met) with
the assumption that the profit from the Market Share Gain is greater
than the Labor Cost Increase from not reducing hours.

\paragraph{Assumption Justification:} Historical evidence confirms this
assumption holds in practice: firms consistently choose output
expansion when productivity increases \citep{baily1992productivity,
  costa2000wage, schor1992}, demonstrating that competitive pressure
overwhelms leisure preferences across diverse market structures and
firm sizes.

Note that for workers, defection leads to a decrease in utility since
$$U_{W,i}(D) - U_{W,i}(C) = (\text{Wage}) - (\text{Wage} + \beta \cdot
\frac{(\alpha-1)H_{i}}{\alpha}) = -\beta \cdot
\frac{(\alpha-1)H_{i}}{\alpha}$$
This indicates a welfare opportunity loss for workers.

\end{proof}

\begin{proposition}[Pareto-Inefficiency in Productivity Competition]
\label{prop:pareto-productivity}
Universal Cooperation ($s^C$) strictly Pareto-dominates Universal
Defection ($s^D$).
\end{proposition}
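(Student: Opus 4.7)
The plan is to prove Pareto-dominance by directly computing $U_{F,i}$ and $U_{W,i}$ at both $s^C$ and $s^D$ using the payoff structure defined in the preceding formalization, and showing each utility is strictly larger at $s^C$ for every firm $i$ and every worker of firm $i$. The crucial observation that will collapse the computation is a \emph{market share invariance} under uniform scaling: if every firm expands output by the same factor $\alpha$, each firm's relative market share is unchanged, since
\[
\frac{\alpha Q_i}{\alpha Q_i + \sum_{j \neq i} \alpha Q_j} = \frac{Q_i}{Q_i + \sum_{j \neq i} Q_j}.
\]
So the revenue term $\pi \cdot (\text{market share})$ is identical at $s^C$ and $s^D$, and the entire comparison reduces to the cost and leisure terms.

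First I would write down $Q_{-i}(s^C_{-i}) = \sum_{j \neq i} Q_j$ and $Q_{-i}(s^D_{-i}) = \alpha \sum_{j \neq i} Q_j$, and plug each into the firm's utility. After the revenue terms cancel, the firm difference simplifies to
\[
U_{F,i}(s^C) - U_{F,i}(s^D) = c H_i - c \frac{H_i}{\alpha} = c H_i \cdot \frac{\alpha - 1}{\alpha} > 0,
\]
using $\alpha > 1$, $c > 0$, $H_i > 0$. Next I would do the analogous two-line computation for workers using the definition of $U_{W,i}$:
\[
U_{W,i}(s^C) - U_{W,i}(s^D) = \beta \cdot \frac{(\alpha - 1) H_i}{\alpha} > 0,
\]
which is strictly positive by the same sign considerations plus $\beta > 0$. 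Since both inequalities are strict and hold for every firm index $i$ and for workers of every firm, this establishes $U(s^C) > U(s^D)$ componentwise, i.e., strict Pareto-dominance, which satisfies Condition C4.

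I do not expect a genuine mathematical obstacle; the heavy lifting was done by the modeling choice that $\pi$ is an exogenous market-wide revenue pool divided by market share. The main conceptual point worth flagging explicitly is the invariance argument, because without it a reader might worry that $s^D$ raises total output and therefore ``should'' raise firm revenue. I would note in one sentence that the model (as stated just above the proposition) deliberately abstracts from the price-war and congestion costs that universal defection would add, so the current proof actually \emph{understates} the welfare gap: any inclusion of those costs would only widen $U_{F,i}(s^C) - U_{F,i}(s^D)$, preserving strict dominance. Finally, I would remark that the Pareto comparison is taken over the welfare-relevant agents identified earlier (firms and their workers), consistent with the scope clause at the start of Section \ref{sec:productivity}.
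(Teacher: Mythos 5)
Your proof is correct and follows exactly the same route as the paper's: you isolate the market-share invariance under uniform $\alpha$-scaling so the revenue terms cancel, then reduce the comparison to the labor-cost saving $cH_i(\alpha-1)/\alpha > 0$ for firms and the leisure gain $\beta H_i(\alpha-1)/\alpha > 0$ for workers. Your closing remark that holding $\pi$ fixed gives a conservative lower bound on the welfare gap also matches the paper's own ``Note on Market Profit.''
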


\begin{proof}
We must show that $U_i(s^C) > U_i(s^D)$ for \textit{all} actors (both
firms and workers). The key insight is that when all firms scale
output by the same factor, relative market shares remain unchanged.

\begin{itemize}
\item \textbf{For Firms ($U_F$):}
  \begin{itemize}
  \item $U_{F,i}(s^C) = \pi \cdot \frac{Q_{i}}{\sum_{j} Q_{j}} - c \cdot \frac{H_{i}}{\alpha}$ 
  \item $U_{F,i}(s^D) = \pi \cdot \frac{\alpha Q_{i}}{\sum_{j} \alpha Q_{j}} - c \cdot H_{i}$
  \end{itemize}
\end{itemize}

The difference is:
\begin{align*}
U_{F,i}(s^C) - U_{F,i}(s^D) &= \left( -c \cdot \frac{H_{i}}{\alpha}
\right) - \left( -c \cdot H_{i} \right) \\
&= \underbrace{c \cdot \frac{(\alpha-1)H_{i}}{\alpha}}_{\text{Labor Cost Savings}}
\end{align*}
This value is strictly positive (assuming $c > 0$). Firms are better
off under $s^C$.

\begin{itemize}
\item \textbf{For Workers ($U_W$):}
  \begin{itemize}
  \item $U_{W,i}(s^C) = \text{Wage} + \beta \cdot \frac{(\alpha-1)H_{i}}{\alpha}$
  \item $U_{W,i}(s^D) = \text{Wage}$
  \end{itemize}
\end{itemize}

The difference is:
$$U_{W,i}(s^C) - U_{W,i}(s^D) = \beta \cdot
\frac{(\alpha-1)H_{i}}{\alpha}$$ This value is strictly positive
(assuming $\beta > 0$). Workers are better off under $s^C$.

Since both firms and workers are strictly better off, $s^C$
Pareto-dominates $s^D$.  The tragedy holds: firms' individually
rational dominant strategy (Proposition \ref{prop:dominance}) leads to a
unique Nash equilibrium ($s^D$) that is worse for everyone.

Crucially, this result holds for all firms simultaneously regardless
of their sizes, initial hours, or market positions. Large and small
firms, those with long and short hours, all benefit from coordinated
restraint. The Pareto-domination is genuine as there exists no firm
that prefers mutual defection to mutual cooperation.
\end{proof}

\paragraph{Note on Market Profit:} The proof assumes total market profit
$\pi$ remains constant for analytical clarity. In practice, universal
output expansion would likely reduce industry profit through price
competition, meaning the actual welfare loss from mutual defection
exceeds our estimate. Our formalization therefore provides a
conservative lower bound on the tragedy's severity.

\paragraph{Result --- Structural Tragedy Proven:} Combined with
the N-player structure, externalities, and enforcement difficulty
established in Section \ref{sec:productivity}, productivity
competition satisfies the complete set of conditions for structural
tragedy. The coordination failure emerges from the fundamental
competitive structure of markets facing productivity shocks, not from
special assumptions about firm homogeneity or market structure.

\subsection{Formalization: Extensions, Robustness, and Limitations}

Having established the core coordination failure for working hours
under uniform productivity gains, we now examine extensions to other
welfare dimensions, robustness to relaxed assumptions, and limitations
of the analysis.

\subsubsection{Extension to Multiple Welfare Dimensions}

While we formalize the coordination failure for working hours, the same 
structural logic applies to other dimensions of worker welfare. When 
productivity increases, firms face analogous binary choices:

\begin{itemize}
\item \textbf{Wages:} Raise proportionally (C) vs. maintain while
  expanding output (D)
\item \textbf{Benefits:} Enhance (C) vs. maintain while expanding
  output (D)
\item \textbf{Job security:} Stabilize employment (C) vs. maintain
  flexibility while expanding (D)
\end{itemize}

Each dimension exhibits the same five conditions for structural
tragedy. Which dimension manifests the coordination failure depends on
institutional constraints. Hour regulations shift pressure to wages;
wage floors shift pressure to employment; strong employment
protections shift pressure to benefits. The declining labor share of
income globally \citep{karabarbounis2014global, elsby2013decline}
provides evidence that productivity gains systematically flow away
from worker welfare across all dimensions, though the specific
manifestation varies by institutional context.

\subsubsection{Relaxing the Uniform Productivity Assumption}

The baseline analysis assumes all firms experience the same productivity 
gain $\alpha$. This simplification is most realistic for economy-wide 
technological shifts (e.g., electrification, computerization, internet) 
that eventually diffuse broadly, or for long-run analysis after differential 
adoption rates have equilibrated through competitive selection.

\paragraph{Dominance Under Heterogeneity:} With heterogeneous productivity 
gains $\alpha_i$ varying across firms, the dominance result (Proposition 
\ref{prop:dominance}) remains fully valid: each firm $i$ has incentive to 
expand given its own $\alpha_i > 1$, regardless of others' productivity 
gains. The proof is identical, replacing $\alpha$ with $\alpha_i$ throughout. 
Therefore the core mechanism driving the tragedy—individual incentive to 
defect regardless of what others do—persists under heterogeneous productivity 
shocks.

\paragraph{Pareto-inefficiency Under Heterogeneity:} The Pareto-inefficiency 
result (Proposition \ref{prop:pareto-productivity}) assumes comparable 
productivity gains. With heterogeneous $\alpha_i$, workers remain strictly 
better off under $s^C$, but the outcome for firms depends on the distribution 
of productivity gains.

When heterogeneity is extreme, high-productivity firms may prefer universal 
defection to capture market share from rivals. However, such extreme 
heterogeneity leads to rapid market consolidation through the exit of 
low-productivity firms \citep{melitz2003}. The surviving firms then face 
the tragedy among themselves: once similar-productivity firms remain in the 
market, Pareto-inefficiency re-emerges among the competitive set. Thus, 
extreme heterogeneity does not eliminate the tragedy but rather localizes 
it to the relevant competitive cohort.

For the empirically common case of moderate heterogeneity within industries, 
Pareto-inefficiency holds: the productivity distribution of surviving firms 
exhibits substantial but not extreme dispersion \citep{syverson2011}, and 
all firms benefit from coordinated restraint relative to the racing 
equilibrium. The dominance property (Condition 3) ensures that even when 
C4 holds weakly, coordination remains structurally difficult because 
\textit{every} firm prefers unilateral defection.

\subsubsection{Generality and Robustness}

The formalization establishes that productivity competition exhibits
structural tragedy under general conditions:

\paragraph{Beyond Binary Choices:} The dominance property is not an artifact 
of the binary choice model. It extends to continuous strategy spaces where 
firms choose specific hours $h_i \in [0, H_{max}]$. In such settings, the 
marginal incentive to increase output (to capture market share) persists at 
every level of $h_i$ provided the competitive elasticity is sufficiently high.

\paragraph{Risk Aversion Strengthens the Tragedy:} Since the utility loss from 
losing market share (existential risk) outweighs the utility gain from 
leisure, risk-averse firms face even stronger pressures to defect than 
risk-neutral ones.

\paragraph{Generality Across Market Structures:} The tragedy holds for:
\begin{itemize}
    \item Firms of any size distribution (concentrated or competitive markets)
    \item Any initial working hours across firms
    \item Any productivity gain $\alpha > 1$
    \item Any configuration of competitor choices
    \item Heterogeneous productivity gains $\alpha_i$ (within the surviving competitive cohort)
\end{itemize}

\subsubsection{Welfare Cost Increases with Productivity}

Sensitivity analysis of the utility functions reveals a counter-intuitive 
dynamic: the welfare cost of the coordination failure \textit{increases} 
with technological progress. We define the ``cooperative surplus'' as the 
difference between welfare in the cooperative outcome ($s^C$) and the Nash 
equilibrium ($s^D$). As the productivity factor $\alpha$ rises, this 
surplus grows asymptotically toward the total labor cost, meaning the gap 
between potential leisure and actual leisure widens.

This implies that highly productive economies sacrifice significantly more 
potential welfare than low-productivity ones. Technological abundance 
amplifies rather than resolves the structural waste. Modern economies with 
eightfold productivity gains since Keynes could provide far more leisure 
than 1930s economies, yet the coordination failure prevents realizing this 
potential.

\subsection{Empirical Validation: The European Case}

The European experience provides crucial empirical validation. If our
structural analysis is correct, even under maximally favorable
institutional conditions, we should observe: (1) partial mitigation
achieving some welfare improvements but not full Keynesian leisure,
(2) high costs requiring substantial institutional apparatus, and (3)
erosion over time as competitive pressure gradually undermines
coordination.

\paragraph{Favorable Conditions:}

Europe represents the most favorable case for coordination: wealthy
democracies with strong institutions, supranational governance
(European Union) providing coordination capacity, cultural consensus
supporting work-life balance, strong labor unions with political
influence, and legal frameworks mandating maximum hours and minimum
vacation. If coordination is possible anywhere, it should succeed in
Europe.

\paragraph{Partial Mitigation:}

European workers have achieved shorter hours than American
workers—approximately 1,350 hours annually in Germany versus 1,800 in
the US \citep{oecd2025}. This represents real achievement: a reduction
of approximately 450 hours compared to the United States. However,
even German workers work far more than Keynes's predicted 780 hours
annually. The coordination is partial, not complete.

Moreover, productivity-welfare decoupling manifests differently across
institutional contexts. While European workers gained leisure relative
to US counterparts, Europe also exhibits declining labor share of
income since 1980, slower wage growth relative to productivity,
persistent pressure on social benefits, and higher unemployment in
several major economies. The tragedy is not escaped but redistributed:
European workers gained leisure at the cost of wage growth and
employment security, while US workers experienced stagnant wages
despite longer hours. Both regions show productivity-welfare
decoupling driven by competitive pressure, manifesting through
whichever dimension institutional constraints leave exposed.

\paragraph{High Costs:}

This coordination requires substantial costs.

\textit{Economic costs:} European GDP growth has lagged the United
States over recent decades, with significant divergence since 2008
\citep{worldbank2024}. While this divergence is driven by multiple
factors, shorter working hours impose a mechanical ceiling on total
output.

\textit{Regulatory costs:} Maintaining shorter hours requires
extensive regulation: France's 35-hour workweek involves complex rules
about overtime, exceptions, and compliance monitoring
\citep{leglobal2024}, while Germany's system requires workplace
councils, sector-level bargaining, and intricate labor law.

\textit{Competitive pressure:} European firms face competitive
disadvantages against firms from regions with longer working hours,
generating continuous political pressure to loosen regulations.

\paragraph{Erosion Over Time:}

Most tellingly, coordination shows signs of erosion. France's 35-hour
law contains numerous exceptions and workarounds; actual hours often
exceed nominal limits \citep{leglobal2024}. Germany's Hartz reforms
increased flexibility and effectively lengthened working hours in some
sectors \citep{iab2023}. European working hours have largely
stabilized since the 1980s despite continued productivity growth
\citep{huberman2007}; the decline observed from 1870-1980 has stalled,
suggesting coordination has reached its limits under persistent
competitive pressure.

\paragraph{Interpretation:}

The European case demonstrates that institutional intervention can
achieve partial, costly mitigation—not elimination of the
tragedy. Institutions shift the equilibrium toward Pareto-optimality
but cannot reach it absent structural change to competitive conditions
themselves. Even under maximally favorable conditions, coordination
remains partial (hours far exceed technological capacity), costly
(requiring extensive institutional apparatus), and fragile (showing
continuous erosion). The underlying tragic structure persists even
when temporarily overridden.

\subsection{Why Standard Solutions Face Structural Barriers?}

Our framework explains why commonly proposed solutions to the
productivity-welfare decoupling face structural barriers. Each
proposed solution fails to address one or more of the five necessary
conditions.

\paragraph{Solution 1 --- Individual Firm Action:} Encourage firms to
voluntarily reduce working hours or increase wages proportional to
productivity gains.

\textit{Problem:} Violates dominance condition (C3). Any firm reducing
hours unilaterally suffers competitive disadvantage while competitors
capture its market share. The logic of collective action
\citep{olson1965} explains why voluntary restraint by individual firms
cannot succeed when defection yields higher payoffs regardless of
competitors' strategies.  First-mover disadvantage is severe in
competitive markets.

\paragraph{Solution 2 --- Industry-Level Voluntary Agreements:} Create
voluntary agreements among firms to coordinate on reduced hours or
higher wages.

\textit{Problem:} Violates enforcement condition (C5). As
\citet{stigler1964} demonstrated for cartel stability, voluntary
agreements among competitors are inherently unstable because each firm
has a dominant incentive to defect by secretly expanding hours to
capture market share while others comply. While reputational
mechanisms or industry associations might provide weak enforcement,
these cannot overcome the profit advantages from defection. Without
government enforcement imposing credible sanctions, such agreements
inevitably collapse.

\paragraph{Solution 3 --- National Regulation:} Implement national labor
standards requiring reduced working hours or minimum wage increases
tied to productivity.

\textit{Problem:} Faces international competition and capital
mobility. \citet{bockerman2012globalization} provide micro-level
evidence from Finland: despite strong labor market institutions and
centralized wage bargaining, exposure to international trade
systematically drove exits of high-labor-share plants, reducing the
aggregate labor share by reallocating production toward more
competitive establishments. This validates the race-to-the-bottom
dynamic \citep{rodrik1997}, demonstrating that even wealthy
democracies with strong institutions face structural pressure from
international competition. National regulation without global
coordination merely shifts production to less regulated jurisdictions.

\paragraph{Solution 4 --- International Labor Agreements:} Establish
international treaties or agreements on labor standards, including
maximum working hours and minimum wages tied to productivity.

\textit{Problem:} Violates enforcement condition
(C5). \citet{flanagan2003} demonstrates that international labor
standards face severe enforcement barriers: the International Labour
Organization lacks sanctioning power, compliance is voluntary, and
nations face strong incentives to free-ride by endorsing standards
publicly while maintaining lax enforcement to attract capital
investment. As \citet{ostrom1990} established, effective governance
requires graduated sanctions and low-cost monitoring, both absent at
the international level due to sovereignty concerns and international
anarchy.

\paragraph{Solution 5 --- Technological Unemployment:} Assume automation
will eventually force shorter working weeks as human labor becomes
unnecessary.

\textit{Problem:} Misidentifies the mechanism. As \citet{autor2015}
documents, despite dramatic technological progress, employment has
remained robust because technology creates complementary tasks even as
it automates existing ones. More fundamentally, automation increases
productivity, which intensifies rather than resolves the coordination
failure. The dominance condition (C3) ensures that firms use
productivity gains to expand market share rather than reduce hours,
unless coordination mechanisms force otherwise. Historical evidence
supports this: productivity increased eightfold since Keynes's
prediction, yet working hours declined far less than proportionally.

\paragraph{Solution 6 --- Cultural Change:} Encourage societal preference
shifts toward valuing leisure over consumption.

\textit{Problem:} Insufficient against structural pressure. Even if
workers prefer leisure, competitive dynamics compel long
hours. \citet{schor1992} documents that despite expressed preferences
for more leisure time, American workers face structural pressures that
override these preferences. The dominance condition (C3) ensures that
cultural preferences alone cannot overcome the structural necessity of
maintaining competitive position when others work longer
hours. Preferences must align with incentives, which the competitive
structure prevents.

These solutions face structural barriers because they fail to address
the complete set of conditions. Partial solutions that leave some
conditions unsatisfied achieve at most temporary, costly, and eroding
coordination, exactly as observed in Europe.

\subsection{What Would Work?}

Our framework identifies interventions that could resolve the
productivity tragedy by breaking one or more of the five necessary
conditions. We emphasize that these interventions are structural
requirements, not policy recommendations, to clarify the magnitude of
barriers any solution must overcome.

\paragraph{Break Condition 1 (N-Player Structure):} Consolidate
decision-makers into smaller groups, making coordination feasible
through reduced free-rider problems.  As \citet{olson1965}
established, smaller groups ($n$) dramatically lower coordination
costs.

However, achieving this through industrial consolidation faces severe
barriers.  Antitrust law in most jurisdictions prohibits
anticompetitive coordination, and for good reason: consolidation that
enables coordination on labor practices would simultaneously enable
price-fixing and output restriction, trading one market failure
(coordination tragedy) for another (monopoly power) that harms
consumer welfare. Alternatively, international consolidation of labor
regulation authority could reduce the number of competing
jurisdictions, but this faces political barriers from sovereignty
concerns and competing national interests.

\paragraph{Break Condition 2 (Binary Choice with Externalities):}
Eliminate the competitive externality where one firm's expansion harms
competitors' market positions.

However, this is structurally impossible since competitive pressure is
inherent to market economies. As \citet{schumpeter1942} argued,
competition through creative destruction—where successful firms
displace less efficient ones—is capitalism's fundamental mechanism for
driving productivity growth. Removing this externality would eliminate
the competitive process itself. The externality is reciprocal:
protecting compliant firms from market share loss necessitates
restricting expanding firms, which contradicts the
productivity-enhancing function of market competition.

\paragraph{Break Condition 3 (Dominance Property):} Reduce the
existential stakes of restraint so that cooperation doesn't threaten
firm or worker survival.

For workers, mechanisms include Universal Basic Income (UBI), stronger
unemployment insurance, or other social safety nets that provide
unconditional material security.  As \citet{vanparijs1995} argues,
guaranteeing an unconditional material baseline grants workers ``real
freedom'' to refuse exploitative employment terms, breaking the
dominance of survival pressure that compels acceptance of
maximum-intensity work arrangements.

For firms, mechanisms would include subsidies, tax incentives, or
market structures that reward restraint rather than penalizing it; for
example, government contracts preferentially awarded to firms with
reduced hours, or subsidies compensating for reduced output during
hour reductions.

However, both face substantial implementation barriers. For
worker-side mechanisms, fiscally adequate UBI levels require
substantial tax increases or redistribution that face political
opposition across diverse political systems
\citep{martinelli2017}. Even where safety nets exist, they may be
insufficient to overcome competitive pressure when rivals maintain
longer hours; workers with basic income still face relative income
competition and status concerns that perpetuate long hours
\citep{frank1999}.

For firm-side mechanisms, subsidies reducing bankruptcy risk from
reduced output face severe moral hazard concerns: firms might claim
hour reductions while maintaining actual hours, or exploit subsidies
without genuine coordination.  Additionally, such programs would
require ongoing substantial public expense and create competitive
distortions favoring subsidized over non-subsidized firms. Most
fundamentally, any mechanism rewarding restraint must overcome the
problem that market structures inherently reward expansion—changing
this requires fundamentally restructuring competitive dynamics, which
returns to the impossibility of breaking Condition 2 (eliminating
competitive externalities).

\paragraph{Break Condition 4 (Pareto-Inefficiency):} Accept the
equilibrium as efficient rather than seeking coordination. Some
economists, notably \citet{prescott2004}, argue that transatlantic
differences in working hours reflect rational responses to marginal
tax rate differences rather than coordination failures, with Americans
optimally choosing longer hours given higher after-tax wages.

However, as \citet{pencavel2018} demonstrates, observed hours often do
not reflect worker preferences; instead, institutional and market
constraints significantly determine working time, and workers cannot
simply choose their preferred hours. Additionally, the tax-incentive
explanation cannot account for the secular decline in labor share
within countries over time, including during periods of tax
stability. The micro-level restructuring documented by
\citet{bockerman2012globalization}—where high-labor-share plants
systematically exit under competitive pressure—provides direct
evidence of coordination failure operating alongside any tax or
preference effects. If workers genuinely prefer current hours and the
equilibrium is Pareto-efficient, no coordination problem exists. Our
framework and the empirical evidence suggest otherwise.

\paragraph{Break Condition 5 (Enforcement Difficulty):} Solve the
enforcement problem through effective global labor standards with
comprehensive monitoring and binding sanctions. This requires
international agreement on maximum hours and minimum wage-productivity
linkages, verification mechanisms for compliance across jurisdictions,
sanctions sufficient to deter defection, and sufficient participation
to prevent race-to-bottom through non-compliant jurisdictions.

However, effective global enforcement faces formidable barriers. As
\citet{ostrom1990} identified, successful commons governance requires
graduated sanctions and low-cost monitoring—both absent in global
labor markets due to international anarchy, sovereignty concerns, and
capital mobility that enables regulatory arbitrage. The International
Labour Organization establishes standards but lacks enforcement power;
achieving binding global labor standards would require unprecedented
coordination among competing nations, effectively solving a
higher-order coordination problem to address the first-order
productivity tragedy.

\paragraph{Comprehensive Solution:} The productivity tragedy can be
solved, but only through interventions that address its structural
foundations. Given the difficulty of breaking any single condition
completely, the most feasible approach likely combines partial
interventions across multiple conditions: strengthening social safety
nets (C3), enhancing coordinated regional labor standards (C5) among
economically integrated blocs, and accepting modest efficiency costs
from reduced competitive pressure. However, as European experience
demonstrates, even such comprehensive efforts face continuous erosion
from global competitive pressure.  Incremental reforms that leave the
five conditions substantially intact will achieve at most partial,
costly, and eroding coordination.

\section{The Tragedy of AI Governance}\label{sec:ai}

We now apply the unified framework to AI development, demonstrating
that AI governance faces the same five conditions but with amplified
intensity. This makes coordination structurally more difficult than
for historical arms control cases including nuclear, biological, and
chemical (NBC) weapons.

\paragraph{Scope of Analysis:} Our analysis focuses on the governance
challenge posed by transformative AI breakthroughs: systems
approaching or exceeding human-level general intelligence (AGI) or
superintelligence. We assume such systems pose non-excludable
existential risks if developed without adequate safety measures
(Bostrom 2014, Russell 2019, Bengio 2025). We examine the implications
of relaxing this assumption after establishing Pareto-inefficiency
(C4).

\subsection{Mapping the Five Conditions to AI Development}
\label{sec:ai-mapping}

\subsubsection{Condition 1: N-Player Structure}

AI development involves numerous independent actors: AI labs, tech
companies, universities and research institutions, nations, and
open-source communities.

The number of actors is large and growing. No single actor can
coordinate AI development unilaterally. The architecture is
fundamentally decentralized across private sector, public sector, and
international actors.

\subsubsection{Condition 2: Binary Choice with Externalities}

When AI capabilities advance, each actor faces essentially a binary choice:
\begin{itemize}
    \item Strategy $C$ (Cooperate/Restrain): Slow development to
      ensure safety, conduct thorough testing, wait for governance
      mechanisms
    \item Strategy $D$ (Defect/Advance): Accelerate capabilities
      development to maintain competitive position
\end{itemize}

Individual choices create negative externalities. When one actor accelerates development:
\begin{itemize}
    \item Competitors face pressure to accelerate or lose market/strategic position
    \item Safety margins narrow as capabilities race ahead of governance
    \item Risks from powerful AI systems affect everyone regardless of who develops them
    \item First-mover advantages create winner-take-all dynamics
\end{itemize}

The externality is severe since one actor's defection substantially
increases risks for all.

\subsubsection{Condition 3: Dominance Property}

Acceleration dominates restraint regardless of competitors' choices:

\textit{Case 1—If competitors restrain:} Actor gains decisive
first-mover advantage by accelerating. In AI markets characterized by
network effects, data advantages, and talent concentration,
first-mover advantages are enormous. The actor that reaches advanced
capabilities first may capture the entire market.

\textit{Case 2—If competitors accelerate:} Actor must accelerate to
avoid strategic elimination. Falling behind in AI capabilities may
mean permanent market displacement for firms or strategic
vulnerability for nations. When competitors accelerate, restraint
means obsolescence.

The dominance is driven by existential stakes for firms, nations, and
researchers. For example, a firm who cannot raise capital for massive
compute (or acceleration) risk obsolescence. This structural pressure
forced even the leading non-profit actor to restructure first into a
`capped-profit' entity \citep{openai2019lp} and eventually into a
Public Benefit Corporation to remove fundraising limits
\citep{openai2025structure}.

\subsubsection{Condition 4: Pareto-Inefficiency}

Universal restraint Pareto-dominates universal acceleration:

If all actors coordinate to slow development and prioritize safety:
\begin{itemize}
    \item Risks from advanced AI are reduced through careful development
    \item Relative competitive positions are preserved (no disadvantage from restraint)
    \item Time is available for governance mechanisms to develop
    \item Benefits from AI still accrue but with better risk management
\end{itemize}

Compare this to mutual acceleration (the equilibrium):
\begin{itemize}
    \item Risks accumulate faster than governance mechanisms develop
    \item Safety testing is compressed or skipped under time pressure
    \item First-mover races waste resources on competitive positioning
    \item Potential for catastrophic outcomes that benefit no one
\end{itemize}

Leading AI researchers explicitly recognize this
Pareto-domination. Surveys show substantial minorities estimating
non-trivial probabilities of catastrophic outcomes \citep{bengio2025,
  grace2018}. The AI safety community argues current trajectories are
collectively harmful \citep{askell2019}. Yet development accelerates.

\paragraph{Asymmetric Gains and the Winner-Take-All Problem:} AI
development creates potential for extremely asymmetric
outcomes. Unlike productivity competition where firms maintain roughly
proportional positions, AI capabilities races may produce
winner-take-all dynamics where the first to achieve advanced AI
captures decisive advantages, potentially controlling entire markets,
achieving strategic dominance, or even technological singularity.

This asymmetry threatens the Pareto-inefficiency condition. If an
actor believes they will be the winner, they might rationally prefer
the racing outcome (where they capture enormous gains) over a
cooperative outcome (where advantages are constrained or shared). For
prospective winners, the situation appears not as a tragedy but as a
rational pursuit of dominance.

However, three factors restore the tragedy structure even for
prospective winners:

\paragraphit{First, existential risk is non-excludable.} The catastrophic
risks from misaligned or insufficiently robust advanced AI systems
affect all actors regardless of who develops them first
\citep{bostrom2014}. A winner who achieves advanced AI through hasty
development faces the same risks of loss of control, unintended
consequences, or catastrophic failure as any other actor. The
instrumental convergence thesis \citep{bostrom2014} suggests that
advanced AI systems will pursue subgoals (self-preservation, resource
acquisition, goal preservation) that may conflict with human interests
regardless of which actor deploys them. Winners and losers alike face
potential extinction or permanent disempowerment.

Crucially, the tragedy persists because the rate of capability gains
in a race condition naturally outpaces the rate of safety
verification. Consequently, the winner is statistically likely to
deploy an unsafe system before establishing control, rendering the
expected utility of `winning' negative even for the victor. This
conclusion is reinforced by empirical data: \citet{bengio2025,
  grace2018} found that substantial minorities of AI researchers
estimate non-negligible probabilities of extremely bad outcomes
including human extinction. If these probability estimates are even
roughly accurate, the astronomical loss associated with an unsafe
deployment outweighs even enormous competitive gains, making the race
irrational for all participants.

\paragraphit{Second, winner identity is uncertain ex ante.} While racing,
no actor knows with certainty they will be the winner. Multiple actors
have credible capabilities. Geopolitical and corporate competition
involves numerous sophisticated actors, any of whom might achieve
breakthroughs. The ``DeepSeek Shock'' of early 2025 serves as a potent
empirical validator of this uncertainty. By achieving frontier-level
reasoning capabilities with orders-of-magnitude greater training
efficiency, a resource-constrained actor demonstrated that massive
capital and hardware `moats' are not impregnable
\citep{deepseek2025}. Crucially, while this lowers barriers to entry,
it does not reduce the `winner-take-all' prize of the final
breakthrough; instead, it increases the volatility of the race by
proving that current leaders cannot rely on resource dominance to
secure their future position. The identity of the ultimate winner
remains radically uncertain. From behind this veil of uncertainty, all
actors—even current leaders—face expected losses from an uncoordinated
race.

This parallels arms races where no state knows ex ante whether it will
achieve dominance. Even if one state would retrospectively prefer the
outcome where it achieved hegemony, all states prefer coordinated
restraint to a race with uncertain winners and certain risks.

\paragraphit{Third, even decisive technological leads may be temporary.}
Unlike traditional military conquest where territory captured remains
under control, AI capabilities can potentially be copied, stolen, or
independently rediscovered. Winner-take-all assumptions may prove
mistaken if multiple actors achieve advanced AI within relatively
short timeframes, or if AI capabilities prove impossible to monopolize
due to espionage, independent discovery, or proliferation.

Therefore, even with extremely asymmetric potential gains, the
Pareto-inefficiency condition holds when existential risks are
incorporated into rational calculation. The racing equilibrium
threatens all actors including prospective winners, making coordinated
restraint collectively preferable despite competitive temptations.

\paragraph{Critical caveat:} This defense of Pareto-inefficiency relies
on two empirical claims: (1) catastrophic risks from advanced AI are
sufficiently probable and severe to dominate even winner-take-all
competitive gains, and (2) actors incorporate these risks into their
decision-making rather than pure output or power maximization. The
first claim is defended by substantial technical literature on AI
safety \citep{bostrom2014, russell2019, amodei2016}. The second claim
is more tenuous: if actors systematically discount existential risks or
believe they can control advanced AI despite expert warnings, the
existential dimension of the tragedy weakens.

However, even under systematic discounting of existential risk, the
tragedy structure persists at reduced intensity. Coordination failure
would manifest through wasteful resource competition: duplicated R\&D
spending across labs, unsustainable talent bidding wars, massive
capital obligations against uncertain returns, and infrastructure
investments vulnerable to algorithmic breakthroughs that devalue
hardware moats-as DeepSeek's efficiency gains demonstrated. The
resulting ``war of attrition'' remains Pareto-inefficient (all
participants would prefer coordinated investment), lowering C3 and C4
from extreme to high intensity. The enforcement difficulty (C5)
remains at extreme intensity regardless of risk estimates, as the
technical barriers to verification are unchanged. The tragedy
persists; only its stakes diminish.

\subsubsection{Condition 5: Enforcement Difficulty}

Multiple structural barriers prevent coordination:

\textit{International Anarchy:} No global authority can enforce
coordinated AI governance. Nations compete for AI
advantages. International agreements lack binding force. AI is
dual-use, making arms control agreements difficult to negotiate.

\textit{Verification Impossibility:} AI development occurs in
software, making verification extremely difficult. Model weights can
be copied and modified. Training runs can occur covertly.

\textit{Monitoring Infeasibility:} Even with agreements, monitoring
compliance faces severe challenges. What constitutes `dangerous' AI
capabilities may be ambiguous until after development. The distributed
nature of AI research across many institutions resists centralized
monitoring.

\textit{Commitment Problems:} Even if actors agree to coordinate,
commitments lack credibility. Each actor has overwhelming incentive to
defect and gain first-mover advantages once others restrain. Secret
defection is technologically feasible. First-mover disadvantage from
restraint is severe.

\subsubsection{Result: Structural Tragedy Confirmed}

AI development satisfies the complete set of conditions with high
intensity when catastrophic risks are incorporated into rational
assessment. The Pareto-inefficiency condition is satisfied for all
actors including prospective winners because existential risks are
non-excludable, winner identity is uncertain ex ante, and even
decisive leads may prove temporary. Therefore, by Theorem
\ref{thm:main}, universal acceleration is the unique Nash equilibrium
despite being Pareto-dominated by coordinated restraint.

As noted in the critical caveat above, actors who systematically
discount existential risks would still face a structural tragedy at
reduced intensity (Index $\approx 1,000$ rather than $\approx 3,981$),
comparable to climate change and bank runs. The empirical evidence
suggests leading actors do recognize catastrophic risks even while
competitive pressure compels continued development, precisely the
tragedy structure our framework predicts.

\paragraph{Robustness of the Racing Dynamic:}

Just as with productivity, this tragedy structure is robust to relaxed
assumptions. The dominance of acceleration holds in continuous
strategy spaces: if actors choose a level of investment or compute
$k_i \in [0, K_{max}]$, the marginal strategic advantage of increasing
$k_i$ remains positive as long as competitors are
advancing. Consequently, `slowing down' is not a stable equilibrium;
the Nash equilibrium collapses to maximum feasible acceleration
($K_{max}$). Furthermore, strategic risk aversion (fear of falling
behind) amplifies the race. Because the loss of sovereignty or market
relevance is viewed as a certain, immediate catastrophe, while
coordination failure consequences (whether existential or economic)
are viewed as probabilistic and future catastrophe, risk-averse actors
rationally prioritize avoiding the immediate deterministic loss,
thereby intensifying the acceleration.

\subsection{AI Governance vs. Historical Arms Control: An Eight-Dimension Comparison}

To assess whether AI governance is more or less difficult than
previous coordination challenges, we systematically compare AI to
successful arms control cases. Nuclear, biological, and chemical (NBC)
weapons, all achieved substantial (though imperfect) international
coordination. How does AI compare?

Table \ref{tab:ai-comparison} summarizes the comparison across eight
crucial dimensions. We now discuss each dimension in detail.

\begin{table}[t]
\centering
\caption{AI Governance vs. Historical Arms Control Across Eight
  Dimensions. AI represents a structural outlier, facing the most
  unfavorable condition for coordination across all eight dimensions
  simultaneously, whereas historical cases benefitted from mitigating
  factors (e.g., observability, separability) in key areas.}
\label{tab:ai-comparison}
\newcolumntype{Y}{>{\raggedright\arraybackslash\hyphenpenalty=10000\exhyphenpenalty=10000}X}

\begin{tabularx}{\textwidth}{@{} p{3cm} Y Y Y @{}}
\toprule
\textbf{Dimension} & \multicolumn{1}{c}{\textbf{Chemical / Bio.}} & \multicolumn{1}{c}{\textbf{Nuclear}} & \multicolumn{1}{c}{\textbf{AI}} \\
\midrule
Strategic Value & Military only, limited & Military, high & Military + Economic, extreme \\
\addlinespace
Dual-Use & Separable (some) & Separable (enrichment) & Inseparable \\
\addlinespace
Verification & Observable (facilities) & Observable (materials) & Unobservable (software) \\
\addlinespace
Reputational Costs & High (taboo) & High (stigma) & Low (prestige) \\
\addlinespace
Entry Barriers & Moderate & Very High & Low \\
\addlinespace
Substitutes & Yes (conventional) & Yes (conventional) & None \\
\addlinespace
Time Scale & Decades warning & Decades warning & Years warning \\
\addlinespace
Convergence & Military only & Military only & Military + Economic \\
\bottomrule
\end{tabularx}
\end{table}

\subsubsection{Dimension 1: Strategic Value}

Chemical and biological weapons have limited strategic value. They are
unreliable, difficult to control, and often militarily ineffective
compared to conventional weapons. This limited value reduced
incentives for development and made restraint easier.

Nuclear weapons have high strategic value for deterrence, making
complete elimination difficult. However, their value is purely
military. Economic incentives for nuclear development are weak.

AI has extreme strategic value in both military and economic
domains. Military applications include autonomous weapons,
intelligence analysis, cyber operations, and logistics
optimization. Economic applications include productivity enhancement,
automation, market advantage, and platform dominance. This dual-domain
value creates reinforcing incentives far stronger than for any
previous arms control case.

\subsubsection{Dimension 2: Dual-Use Inseparability}

Chemical and biological agents have some separability between civilian
and military applications. Industrial chemicals differ from weaponized
agents. Civilian biology can be distinguished from weaponization
programs with moderate effort.

Nuclear programs have some separability. Civilian nuclear power uses
low-enriched uranium; weapons require highly enriched uranium or
plutonium. Monitoring enrichment levels provides some verification
capability.

AI research and development is completely inseparable between civilian
and military applications. The same models, training techniques, and
hardware serve both purposes. Foundational research benefits both
domains simultaneously. Attempting to restrict military AI while
permitting civilian AI is structurally impossible—they are identical
technologies with different deployment contexts.

This inseparability creates severe enforcement problems. Any agreement
to restrict military AI would face immediate problems: civilian AI
companies would argue their research is purely civilian, verification
would be impossible, and the distinction itself would be meaningless.

\subsubsection{Dimension 3: Verification}

Chemical and biological weapons require observable facilities leaving
physical traces. Nuclear weapons require large enrichment facilities
and detectable materials, enabling moderate verification success. AI
development occurs in software and is fundamentally unverifiable—model
weights can be copied instantly, training runs can occur covertly, and
testing for dangerous capabilities requires developing the very
systems one seeks to restrict.

\subsubsection{Dimension 4: Reputational Costs}

Chemical and biological weapons face powerful taboos with strong
international norms \citep{price1995, opcw2025, unoda2025}. Nuclear
weapons face mixed stigma and prestige. AI development faces inverse
reputational dynamics, leading development confers prestige rather
than stigma, making restraint appear as weakness rather than
responsibility.

\subsubsection{Dimension 5: Entry Barriers}

Nuclear weapons face very high barriers (billions of dollars, advanced
technology, years of development) that concentrated capabilities in
few hands, enabling coordination. AI faces low and declining entry
barriers: frontier development costs hundreds of millions but is
accessible to many actors and declining with algorithmic progress,
multiplying the coordination problem.

\subsubsection{Dimension 6: Substitutes}

Chemical, biological, and nuclear weapons have conventional
substitutes for most military objectives. AI has no substitutes for
either economic or military applications—actors who restrain fall
irreversibly behind, amplifying the dominance condition.

\subsubsection{Dimension 7: Time Scale}

Chemical and nuclear weapons had decades between emergence and arms
control (1925-1993 for chemical, 1942-1968 for nuclear). AI governance
faces compressed timelines: modern deep learning emerged around 2012,
with expert surveys estimating a 50\% chance of human-level AI within
roughly two decades \citep{grace2018}, leaving far less time for
institutional development.

\subsubsection{Dimension 8: Economic-Military Convergence}

Chemical and biological weapons involved only military
considerations. Nations developed and deployed them (when they did)
for military advantage, making the coordination problem purely
security-focused.

Nuclear weapons similarly involved only military considerations, with
the exception of some civilian nuclear power development which proved
separable.

AI represents complete convergence of economic and military
imperatives. Every advance in civilian AI capabilities immediately
enhances military capabilities. Every AI laboratory contributes
simultaneously to economic competitiveness and military
potential. Nations cannot choose to prioritize economic development
over military security or vice versa—AI advancement serves both goals
identically.

This convergence creates reinforcing pressures absent in previous
coordination challenges. Even if nations could overcome military
competition (as partial nuclear arms control achieved), economic
competition would independently drive AI development. Even if firms
could coordinate to slow capabilities races for safety, military
imperatives would independently accelerate development. The dual
reinforcement makes coordination structurally more difficult than
previous single-domain challenges.

\subsubsection{Summary Across Dimensions}

AI faces unfavorable conditions on all eight dimensions compared to
successful historical arms control:
\begin{itemize}
    \item Higher strategic value (dual-domain)
    \item Complete dual-use inseparability
    \item Fundamental verification impossibility
    \item Reversed reputational incentives
    \item Lower entry barriers multiplying actors
    \item No substitutes increasing opportunity costs
    \item Compressed timelines preventing institutional development
    \item Economic-military convergence creating reinforcing pressures
\end{itemize}

This systematic comparison establishes that AI governance faces
coordination challenges structurally worse than any previous arms
control case. AI faces unfavorable conditions on every dimension
simultaneously. The inseparability of economic and military AI (the
dual-use dilemma) is the decisive structural barrier: stopping
military AI requires stopping economic AI, which is suicidal for a
nation’s productivity.

\subsection{Empirical Validation: The Russia-Ukraine Drone War}

The Russia-Ukraine conflict validates the framework's structural
predictions for competitive dynamics under existential stakes. When
Russia invaded Ukraine in February 2022, both sides had minimal drone
capabilities. Within two years, deployment escalated from dozens to
thousands of drones monthly \citep{csis2024, watling2023}, with
Ukraine expected to produce 4.5 million drones in 2025
\citep{osw2025}. This exponential growth occurred despite years of
prior international dialogue on weapons governance.

The escalation demonstrates our framework's core dynamic. Each side
faced overwhelming incentive to match the other's deployment (C3:
dominance driven by existential stakes). Drone deployment by one side
created immediate pressure for counter-deployment (C2: negative
externalities). No international authority could enforce restraint
during active conflict (C5: enforcement impossibility). Pre-war
governance discussions proved irrelevant once competitive pressure
intensified.

Current drones remain predominantly remotely piloted: physically
observable, with detectable manufacturing and verifiable
deployment. Yet governance still failed. Now consider autonomous
drones: more lethal, requiring fewer operators, capable of faster
response, but also more prone to errors without human oversight. The
governance challenge steepens: verifying autonomous compliance
requires inspecting software logic, yet strategic incentives preclude
sharing targeting code that would reveal vulnerabilities to
adversaries. The transition from remote-controlled to autonomous
systems amplifies every dimension of the coordination failure while
adding verification impossibility.

If governance fails for remotely piloted drones, a relatively
governable technology, it will fail more decisively for autonomous
drones. And if autonomous drone governance proves intractable,
consider AI integrated across the full spectrum of military systems:
autonomous vehicles, cyber operations, logistics, intelligence
analysis, command and control. Each integration multiplies the
coordination challenge our framework identifies. The drone war thus
serves as both confirmation of the framework and a conservative lower
bound for AI governance challenges ahead.

\subsection{Why Standard Solutions Face Structural Barriers?}

Our framework explains why commonly proposed AI governance solutions
face structural barriers. Each solution fails to adequately address
one or more of the five necessary conditions.

\paragraph{Solution 1 --- Voluntary Industry Self-Regulation:} Encourage AI
labs to voluntarily adopt safety standards and slow development.

\textit{Problem:} Violates dominance condition (C3). As
\citet{maas2019viable} argues in the context of military AI arms
control, strategic stability concerns make voluntary restraint
inherently unstable—any actor that restrains development while
competitors advance faces strategic elimination. The 2023 open letter
calling for an AI development pause produced no actual pause despite
thousands of signatures \citep{fli2023pause}, demonstrating that even
widespread recognition of risks cannot overcome competitive
pressure. Even safety-conscious organizations face overwhelming
pressure to accelerate when rivals make breakthroughs. OpenAI's
trajectory exemplifies this: founded as a non-profit to ensure safe AI
development, competitive necessity forced restructuring first into a
``capped-profit'' entity \citep{openai2019lp} and eventually into a
Public Benefit Corporation to remove fundraising limits
\citep{openai2025structure}. Voluntary restraint is unstable under
winner-take-all competitive dynamics.

\paragraph{Solution 2 --- International Treaties:} Establish binding
international agreements limiting AI capabilities development,
analogous to nuclear non-proliferation treaties.

\textit{Problem:} Violates enforcement condition (C5). As our
eight-dimension comparison (Table~\ref{tab:ai-comparison})
demonstrates, AI faces more severe verification challenges than
biological or chemical weapons. \citet{koblentz2009} documents how
verification failures have rendered the Biological Weapons Convention
largely ineffective despite physical detectability of biological
weapons programs.  AI faces even steeper challenges: software can be
developed covertly, training can be concealed, and capabilities cannot
be reliably detected without access to proprietary
systems. International anarchy means no global authority can enforce
compliance, and verification impossibility precludes credible
monitoring.

\paragraph{Solution 3 --- AI Safety Research:} Solve the technical alignment
problem, making advanced AI systems safe by design.

\textit{Problem:} Insufficient to resolve coordination failure even if
technically successful. Safety research addresses the risk of
\textit{misaligned} AI but cannot eliminate competitive pressure from
\textit{aligned} AI. As discussed in Condition~4 in
Section~\ref{sec:ai-mapping}, even perfectly aligned AI systems create
winner-take-all dynamics through economic and strategic
advantages. The coordination failure arises from competitive pressure
to be first, which safety research alone cannot eliminate. Moreover,
safety research itself faces the racing dynamic: labs may cut corners
on safety validation when competitors approach breakthroughs
\citep{armstrong2016}.

\paragraph{Solution 4 --- National Regulation:} Implement comprehensive
safety requirements and capability restrictions within individual
nations.

\textit{Problem:} Violates enforcement condition (C5) through
regulatory arbitrage.  As \citet{armstrong2016} formally model,
unilateral safety regulations create competitive disadvantage, causing
development to migrate to less cautious jurisdictions—merely shifting
rather than reducing existential risk. The decentralized,
software-based nature of AI development enables rapid relocation
compared to physical industries. Without global coordination, national
regulation faces the same race-to-bottom dynamics as labor standards
under globalization.

\paragraph{Solution 5 --- Compute Governance:} Restrict access to advanced
computing hardware required for frontier AI development, as proposed
by \citet{anderljung2023}.

\textit{Problem:} Faces erosion over time despite initial
promise. While compute restrictions could theoretically create entry
barriers through hardware chokepoints, \citet{sastry2024} document how
algorithmic efficiency improvements systematically lower hardware
requirements, enabling actors to achieve frontier capabilities with
previously sub-frontier hardware. The DeepSeek breakthrough of early
2025 exemplifies this dynamic: achieving frontier reasoning
capabilities with orders-of-magnitude greater training efficiency
effectively undermines static compute-based thresholds.  Over time,
what requires cutting-edge datacenters today becomes achievable on
commodity hardware, eroding the enforcement mechanism. Additionally,
compute governance requires international cooperation that faces
standard enforcement barriers from sovereignty concerns and
verification challenges.

\paragraph{Solution 6 --- Energy Monitoring:} Monitor data center energy
consumption as a verification mechanism for frontier AI development,
since large-scale training requires massive electricity usage. While
\citet{sastry2024} note that energy consumption correlates with
compute usage for large training runs, they acknowledge this creates
at best an indirect signal of AI development.

\textit{Problem:} Energy monitoring faces fundamental dual-use
problems. Data center energy consumption is observable and growing
rapidly, but nations must invest in energy infrastructure for reasons
entirely independent of AI—grid modernization, electrification,
industrial policy, and general cloud computing demand. Restricting
energy investment to limit AI development is economically and
politically infeasible given these legitimate alternative uses. Any
agreement to monitor energy for AI governance purposes would be
trivially circumvented by dual-use justifications: large energy
consumption could be attributed to cryptocurrency mining, scientific
computing, commercial cloud services, or other legitimate
activities. Unlike specialized AI chips that have limited alternative
uses, energy infrastructure is inherently general-purpose, making
verification through energy monitoring structurally impossible.

These solutions face structural barriers because they fail to address
the complete set of conditions simultaneously. The analysis suggests
that AI governance requires interventions of unprecedented scope and
coordination—qualitatively beyond what successful arms control
achieved—precisely because AI development satisfies all five tragedy
conditions with extreme intensity.

\subsection{What Would Work?}

Our framework identifies interventions that could theoretically
resolve the AI governance tragedy by breaking one or more of the five
necessary conditions. For AI governance, this is structurally more
difficult than for productivity due to amplified intensity across all
conditions. We also emphasize that these interventions represent
structural requirements, not policy recommendations, to clarify the
magnitude of barriers any solution must overcome.

\paragraph{Break Condition 1 (N-Player Structure):} Consolidate all
transformative AI development into a single global project—a heavily
regulated monopoly or international consortium. \citet{bostrom2014}
terms this a ``Singleton'': a global coordination mechanism with
sufficient authority to prevent competitive AI development. Such an
entity could eliminate the racing dynamic by becoming the sole
developer.

However, this solution creates severe risks that may exceed those it
solves.  Concentrating transformative AI development in a single
entity enables permanent power consolidation, potential totalitarian
control over transformative technology, and elimination of competitive
safeguards against misuse. The cure might be worse than the disease,
trading coordination tragedy for concentration risk. Additionally,
achieving global agreement to establish such an entity faces
formidable obstacles given competing national interests, sovereignty
concerns, and deep mistrust among major powers regarding who would
control such an entity.

\paragraph{Break Condition 2 (Binary Choice with Externalities):} Develop
AI systems that inherently do not create strategic advantages,
eliminating competitive pressure.

However, this is structurally difficult because AI capabilities
\textit{inherently} create strategic advantages since economic
productivity, military capability, or scientific advancement cannot be
developed in ways that do not affect competitive positions.  This
reflects the \citet{collingridge1980} dilemma: the strategic
implications and competitive externalities of transformative AI only
become fully apparent once the technology is developed, at which point
competitive dependencies preclude coordinated restraint. Early
restraint fails from uncertainty about implications; late restraint
fails from path dependency and lock-in to competitive dynamics. The
externality is intrinsic to the technology's transformative potential.

\paragraph{Break Condition 3 (Dominance Property):} Reduce
winner-take-all dynamics and existential stakes making defection
dominant. Mechanisms include: (1) credible international agreements
reducing security competition between major powers, (2) economic
restructuring through profit-sharing mechanisms such as the `Windfall
Clause' proposed by \citet{okeefe2020}, where AI developers
pre-commit to distributing exceptional profits (exceeding 1\% of world
GDP) broadly, and (3) demonstrated proof that advanced AI poses
catastrophic risks even to prospective winners, undermining the
perceived advantage of racing.

However, all three mechanisms face substantial barriers. International
security agreements face severe commitment credibility problems:
nations cannot reliably forego military advantages from transformative
AI when adversaries might defect, creating overwhelming pressure to
maintain competitive development. Economic restructuring through
windfall clauses requires binding pre-commitment before competitive
pressures emerge, making voluntary adoption difficult, and cannot
address non-economic strategic advantages—military capability and
geopolitical influence remain winner-take-all even with
profit-sharing. Demonstrated proof of catastrophic risks faces an
epistemic catch-22: definitive proof requires observing catastrophic
outcomes, at which point it is too late. As \citet{bostrom2014} notes,
the difficulty of credibly demonstrating novel existential risks
before they materialize is a fundamental challenge, and even
acknowledged risks may be insufficient when individual actors face
dominant incentives to proceed if they believe others will do so
regardless.

\paragraph{Break Condition 4 (Pareto-Inefficiency):} Accept the racing
equilibrium as optimal, rejecting the premise that coordination would
improve outcomes. Some argue that rapid AI deployment's benefits
outweigh coordination costs, effectively denying the
Pareto-inefficiency condition.

However, this position faces severe challenges. As \citet{bostrom2014}
and \citet{russell2019} document, the existential risks from
misaligned superintelligence are non-excludable since even prospective
winners cannot escape consequences of systems that exceed human
control. The astronomical disvalue of existential catastrophe
dominates even enormous competitive gains. This position requires
either underestimating these non-excludable risks or assuming they are
negligible, both of which the substantial AI safety research
literature finds unconvincing. As discussed in
Section~\ref{sec:ai-mapping}, even prospective winners face negative
expected utility from racing when probability of catastrophic outcomes
is considered.

\paragraph{Break Condition 5 (Enforcement Difficulty):} Implement
hardware-based verification using advanced semiconductor manufacturing
as a chokepoint.  \citet{anderljung2023} argue that `compute
governance', e.g., monitoring and controlling access to high-end GPU
clusters required for frontier AI training, offers the most viable
enforcement mechanism, as specialized hardware is more detectable than
software development. A global regime tracking chip production, data
center construction, and computing cluster deployment could
theoretically enforce development limits.

However, this advantage faces systematic erosion. As
\citet{sastry2024} document, algorithmic efficiency improvements
systematically lower hardware requirements, allowing actors to achieve
frontier capabilities with previously sub-frontier hardware—as
demonstrated by DeepSeek's early 2025 breakthrough achieving frontier
results with dramatically reduced compute. Over multi-year timescales,
hardware governance faces obsolescence as efficiency improvements
enable powerful AI development on widely available
hardware. Additionally, effective governance requires unprecedented
international cooperation among competing powers, including
adversarial nations with strong incentives to defect for strategic
advantage. Verification of compliance faces challenges from concealed
computing clusters and clandestine development programs.

\paragraph{Comprehensive Solution:} Resolving the AI governance tragedy
requires interventions that simultaneously address multiple
conditions: reducing the number of independent actors (C1),
implementing credible verification mechanisms (C5), and reducing
winner-take-all dynamics (C3). The most feasible approach likely
combines:
\begin{itemize}
    \item International consortium among leading AI powers (partial C1
      consolidation)
    \item Compute governance with hardware tracking (partial C5
      enforcement)
    \item Windfall clauses and benefit-sharing mechanisms (partial C3
      reduction)
    \item Continued investment in technical AI safety (risk mitigation
      if governance fails)
\end{itemize}

However, our framework reveals the structural difficulty: each
intervention faces severe implementation barriers, and partial
solutions leave the tragedy substantially intact. The gap between
theoretical solutions and practical feasibility measures the
unprecedented nature of the coordination challenge. Unlike nuclear
weapons, where physical constraints enabled partial success through
non-proliferation regimes, AI's software nature, dual-use
characteristics, and low entry barriers create coordination
requirements beyond historical precedent. The framework suggests we
face a structurally harder problem than any successfully managed
global coordination challenge to date.

\subsection{Reconciliation with Conservative Productivity Estimates}

\citet{acemoglu2025} provides conservative estimates of AI's
productivity impact (approximately 0.55\% annual growth over ten
years), attributing this limited potential to the prevalence of
``so-so automation'' that displaces labor without maximizing output.
However, the structural tragedy operates independently of both
magnitude and technological type. Even if AI offers only modest,
displacement-heavy advantages, the dominance condition (C3) holds:
firms cannot afford to forego even marginal efficiencies when
competitors adopt them. The gap between exponential growth rates
widens over time regardless of the base rate, meaning economic and
military advantages compound. Conservative estimates still show AI
confers sufficient advantages that actors cannot afford restraint
while competitors advance.

Uncertainty amplifies this dynamic. Consistent with Prospect Theory
\citep{kahneman1979prospect}, actors facing the perceived certainty
of strategic obsolescence will gamble on risky acceleration rather
than accept certain disadvantage.

\section{Implications and Falsifiability}\label{sec:implications}

Having established that both productivity competition and AI
development satisfy the all necessary and sufficient conditions for
structural tragedy, we now discuss implications, falsifiability, and
the crucial distinction between structural pressure and metaphysical
necessity.

\subsection{What the Framework Predicts}

Our framework generates strong predictions testable against empirical
evidence:

\paragraph{For Productivity:}
\begin{enumerate}
    \item Working hours will not decline substantially in competitive
      market economies absent structural intervention
    \item Productivity gains will primarily flow to output expansion
      rather than welfare improvements
    \item Attempts to coordinate work reduction through voluntary
      agreements will fail
    \item National regulations reducing hours will face competitive
      pressure and erosion
    \item Cultural shifts favoring leisure will be insufficient absent
      structural change
\end{enumerate}

\paragraph{For AI:}
\begin{enumerate}
    \item Voluntary pauses on capabilities development will fail to
      materialize or will collapse quickly
    \item Safety commitments will erode under competitive pressure as
      capabilities approach advanced levels
    \item Racing dynamics will intensify as capabilities approach
      human-level performance
    \item National regulations alone will be insufficient and will
      face circumvention
    \item International coordination attempts will face structural
      barriers absent enforcement mechanisms
\end{enumerate}

\paragraph{Falsifiability:}

The framework is falsified if we observe:
\begin{itemize}
    \item Working hours declining substantially despite continued
      competition
    \item Sustained and verifiable voluntary pause on AI capabilities
      development lasting 12+ months with major actors participating
    \item Verifiable global cap on compute or capabilities enforced
      without defection
    \item Coordination achieving substantial progress on any of the
      five conditions without new enforcement mechanisms
    \item Removal of any one of the five conditions through policy or
      institutional change
\end{itemize}

These predictions are specific enough to be empirically testable. The
framework does not claim coordination is metaphysically impossible but
only structurally difficult. Evidence of successful coordination that
overcomes structural barriers would falsify our analysis.

\subsection{Scope and Limitations}

This framework identifies structural conditions that create
coordination difficulties, but we do not claim to provide
comprehensive causal explanations for all historical variation in
productivity-welfare relationships or coordination outcomes across
domains.

\paragraph{Temporal Variation:} The intensification of the tragedy after
1980 coincides with globalization, union decline, and increased
capital mobility strengthening the conditions our framework
identifies. However, detailed historical analysis is needed to
establish definitive causal mechanisms for these temporal shifts.

\paragraph{Cross-national Variation:} Different institutional
arrangements produce different manifestations of the productivity
tragedy. Some countries exhibit the tragedy through stagnant wages
despite productivity growth, others through stable hours but higher
unemployment, and still others through pressure on benefits and job
security. Our framework predicts that productivity gains will not flow
fully to workers when competitive conditions are strong, but does not
predict which specific dimension will bear the adjustment cost in each
institutional context. Analysis of these institutional details and
their interaction with competitive pressure is beyond our scope.

\paragraph{Condition Intensity:} As we have shown with the tragedy
intensity extension, real-world conditions vary in intensity. When
conditions hold weakly rather than strongly, partial coordination may
be more feasible and sustainable. For instance, European labor market
coordination succeeds in maintaining lower working hours than the
United States, suggesting that weaker global competitive pressure (in
non-tradable sectors) or stronger institutions can attenuate (though
not eliminate) the tragedy. Future research should investigate
systematically how condition intensity relates to the degree of
coordination difficulty and the sustainability of partial coordination
arrangements.

\paragraph{Empirical Validation:} We provide initial empirical evidence
through declining labor share of income, productivity-wage decoupling,
the European case for productivity, and the Russia-Ukraine drone war
for AI development. However, comprehensive testing across historical
periods, countries, sectors, and coordination attempts is necessary to
fully validate our framework and identify boundary conditions. We view
this paper as establishing a diagnostic framework and providing
suggestive evidence, with systematic empirical validation remaining an
important direction for future research.

\section{Conclusion}\label{sec:conclusion}

\subsection{Summary of Contributions}

We have established four contributions:

First, we synthesized a unified diagnostic framework from established
game theory (the social dilemma) by identifying five necessary and
sufficient conditions. Second, we extended this framework to quantify
tragedy severity through condition intensities, introducing a Tragedy
Index that enables systematic comparison of coordination difficulty
across domains, revealing that transformative AI breakthroughs face
orders-of-magnitude greater coordination challenges than climate
change, nuclear weapons, or bank runs. Third, we applied this
framework to productivity, providing a formal game-theoretic model
(building on \citep{schnaiberg1980}) that proves the coordination
failure compelling firms to expand output is a structural tragedy,
explaining Keynes's failed prediction. The European case shows the
tragedy is not escaped but merely redistributed at high cost. Fourth,
we applied this framework to AI governance, demonstrating that
transformative breakthroughs face the same structure but with
amplified intensity.  Our systematic comparison showed such governance
is structurally more difficult than historical arms control. The drone
war (on the way towards full autonomy) validates this dynamic.

\subsection{Theoretical and Practical Significance}

This structural framework enables several advances:

\textbf{Prediction and Comparison:} The framework enables systematic
prediction of coordination difficulty and comparison across domains.

\textbf{Intervention Design:} Understanding which conditions create
tragedy clarifies intervention targets. Only interventions targeting
the five structural conditions have prospects for success.

\textbf{Falsifiability:} The framework generates testable predictions
about which coordination attempts will succeed or fail based on
structural conditions.

\textbf{Clear Diagnosis:} The framework explains why standard
solutions fail and provides realistic assessment of coordination
difficulty.

\subsection{Future Research}

This framework enables several research directions:

\paragraph{Identification of New Tragedies:} By checking the five
conditions, can new coordination failures be identified in emerging
domains? Do technologies like quantum computing, synthetic biology, or
geoengineering exhibit the structural conditions that predict coordination
difficulty? Could applying the framework to existing problems where
solutions have failed reveal structural barriers previously overlooked,
suggesting interventions that target the underlying conditions rather than
symptoms?

\paragraph{Intensity Measurement:} What is the best way to quantify
condition intensity? Is our Tragedy Index formulation a useful
starting point, or would alternative aggregation methods better
capture coordination difficulty? How does measured intensity correlate
with observed governance outcomes across historical cases? Could
systematic application of game-theoretic measures (e.g., Price of
Anarchy) across domains validate the Index's ordinal rankings and
reveal whether structural and payoff-based approaches converge?

\paragraph{Formalization Refinement:} Our framework applies game-theoretic
concepts largely heuristically. Could dynamic models with continuous
strategies and empirically calibrated parameters confirm the tragedy
persists under more realistic assumptions? Would computational simulations
of multi-agent systems validate equilibrium predictions?

\paragraph{Productivity Model Extensions:} Our productivity formalization uses
simplified utility functions and assumes comparable productivity gains.
Could more realistic utility specifications incorporating risk aversion,
time preferences, and worker-firm bargaining better capture the mechanisms?
How does the tragedy structure change under persistent heterogeneous
productivity shocks where market consolidation is incomplete or ongoing?

\paragraph{Cross-domain Dynamics:} How do tragedies in different domains
interact? Does the productivity tragedy accelerate AI development by
providing resources and competitive pressure for the capabilities race?
Can success in coordinating one domain create spillovers that facilitate
coordination in structurally similar domains?

\paragraph{Comparative Governance Analysis:} Would systematic comparison of
governance attempts across domains reveal which structural conditions most
resist coordination? Which interventions most effectively target specific
conditions, and under what circumstances do partial solutions succeed or
fail?

\paragraph{Historical Validation:} Could analysis of past coordination
successes and failures through the framework reveal whether structural
changes preceded successful coordination? Do cases like the Montreal
Protocol, nuclear test bans, or labor regulations show condition-breaking
interventions, or merely temporary overrides that eventually eroded?

\paragraph{Military AI and Conflict Escalation:} Does autonomous weapons
development create a distinct tragedy where AI-enabled military systems
lower the threshold for initiating conflict? If decision-making speed and
reduced human casualties make war less costly for attackers, does this
break the mutual restraint that deterrence theory assumes? Could the
framework predict which military applications of AI face the most severe
coordination barriers?

\paragraph{Institutional Resilience:} The European case shows partial
coordination under favorable conditions, but with signs of erosion. What
determines the sustainability of costly institutional overrides? Can we
predict which coordination arrangements will persist versus collapse under
competitive pressure?

\paragraph{Implications for Political Science and Economics:} Could the
structural tragedy framework unify phenomena currently treated separately
across disciplines? Would reframing problems like arms races, trade wars,
regulatory competition, or financial contagion through the five-condition
lens reveal common structural solutions? Can the framework inform
institutional design in international relations and political economy?

\bibliographystyle{plainnat}
\bibliography{dasdan}

\begin{thebibliography}{82}
\providecommand{\natexlab}[1]{#1}
\providecommand{\url}[1]{\texttt{#1}}
\expandafter\ifx\csname urlstyle\endcsname\relax
  \providecommand{\doi}[1]{doi: #1}\else
  \providecommand{\doi}{doi: \begingroup \urlstyle{rm}\Url}\fi

\bibitem[Acemoglu(2025)]{acemoglu2025}
Daron Acemoglu.
\newblock The simple macroeconomics of {AI}.
\newblock \emph{Economic Policy}, 40\penalty0 (121):\penalty0 13--58, 2025.

\bibitem[Acemoglu and Restrepo(2020)]{acemoglu2020}
Daron Acemoglu and Pascual Restrepo.
\newblock Robots and jobs: {Evidence} from {US} labor markets.
\newblock \emph{Journal of Political Economy}, 128\penalty0 (6):\penalty0
  2188--2244, 2020.

\bibitem[Adilov et~al.(2015)Adilov, Alexander, and Cunningham]{adilov2015}
Nodir Adilov, Peter~J. Alexander, and Brendan~M. Cunningham.
\newblock An economic ``{Tragedy of the Commons}'' in {Low Earth Orbit}.
\newblock \emph{Economics Letters}, 137:\penalty0 169--171, 2015.

\bibitem[Akerlof(1982)]{akerlof1982}
George~A Akerlof.
\newblock Labor contracts as partial gift exchange.
\newblock \emph{The Quarterly Journal of Economics}, 97\penalty0 (4):\penalty0
  543--569, 1982.

\bibitem[Alchian(1950)]{alchian1950uncertainty}
Armen~A Alchian.
\newblock Uncertainty, evolution, and economic theory.
\newblock \emph{Journal of Political Economy}, 58\penalty0 (3):\penalty0
  211--221, 1950.

\bibitem[Alesina et~al.(2005)Alesina, Glaeser, and Sacerdote]{alesina2005}
Alberto Alesina, Edward Glaeser, and Bruce Sacerdote.
\newblock Work and leisure in the {United States} and {Europe}: {Why} so
  different?
\newblock \emph{NBER Macroeconomics Annual}, 20:\penalty0 1--64, 2005.

\bibitem[Amodei et~al.(2016)Amodei, Olah, Steinhardt, Christiano, Schulman, and
  Man{\'e}]{amodei2016}
Dario Amodei, Chris Olah, Jacob Steinhardt, Paul Christiano, John Schulman, and
  Dan Man{\'e}.
\newblock Concrete problems in {AI} safety.
\newblock \emph{arXiv preprint arXiv:1606.06565}, 2016.

\bibitem[Anderljung et~al.(2023)Anderljung, Barnhart, Korinek, Leung, O'Keefe,
  Whittlestone, Avin, Brundage, Bullock, Cass-Beggs, et~al.]{anderljung2023}
Markus Anderljung, Joslyn Barnhart, Anton Korinek, Jade Leung, Cullen O'Keefe,
  Jess Whittlestone, Shahar Avin, Miles Brundage, Justin Bullock, Duncan
  Cass-Beggs, et~al.
\newblock Frontier {AI} regulation: {Managing} emerging risks to public safety.
\newblock \emph{arXiv preprint arXiv:2307.03718}, 2023.

\bibitem[Armstrong et~al.(2016)Armstrong, Bostrom, and Shulman]{armstrong2016}
Stuart Armstrong, Nick Bostrom, and Carl Shulman.
\newblock Racing to the precipice: {A} model of artificial intelligence
  development.
\newblock \emph{AI \& Society}, 31\penalty0 (2):\penalty0 201--206, 2016.

\bibitem[Askell et~al.(2019)Askell, Brundage, and Hadfield]{askell2019}
Amanda Askell, Miles Brundage, and Gillian Hadfield.
\newblock The role of cooperation in responsible {AI} development, 2019.
\newblock URL \url{https://arxiv.org/abs/1907.04534}.

\bibitem[Autor(2015)]{autor2015}
David~H. Autor.
\newblock Why are there still so many jobs? {The} history and future of
  workplace automation.
\newblock \emph{Journal of Economic Perspectives}, 29\penalty0 (3):\penalty0
  3--30, 2015.

\bibitem[Baily et~al.(1992)Baily, Hulten, and Campbell]{baily1992productivity}
Martin~Neil Baily, Charles Hulten, and David Campbell.
\newblock Productivity dynamics in manufacturing plants.
\newblock \emph{Brookings Papers on Economic Activity. Microeconomics},
  1992:\penalty0 187--267, 1992.

\bibitem[Becker(1965)]{becker1965}
Gary~S. Becker.
\newblock A theory of the allocation of time.
\newblock \emph{Economic Journal}, 75\penalty0 (299):\penalty0 493--517, 1965.

\bibitem[Bengio et~al.(2025)Bengio, Mindermann, Privitera, Besiroglu,
  Bommasani, Casper, Choi, Fox, Garfinkel, Goldfarb, Heidari, Ho, Kapoor,
  Khalatbari, Longpre, Manning, Mavroudis, Mazeika, Michael, Newman, Ng, Okolo,
  Raji, Sastry, Seger, Skeadas, South, Strubell, Tramèr, Velasco, Wheeler,
  Acemoglu, Adekanmbi, Dalrymple, Dietterich, Felten, Fung, Gourinchas, Heintz,
  Hinton, Jennings, Krause, Leavy, Liang, Ludermir, Marda, Margetts, McDermid,
  Munga, Narayanan, Nelson, Neppel, Oh, Ramchurn, Russell, Schaake, Schölkopf,
  Song, Soto, Tiedrich, Varoquaux, Yao, Zhang, Albalawi, Alserkal, Ajala,
  Avrin, Busch, de~Leon Ferreira~de Carvalho, Fox, Gill, Hatip, Heikkilä,
  Jolly, Katzir, Kitano, Krüger, Johnson, Khan, Lee, Ligot, Molchanovskyi,
  Monti, Mwamanzi, Nemer, Oliver, Portillo, Ravindran, Rivera, Riza, Rugege,
  Seoighe, Sheehan, Sheikh, Wong, and Zeng]{bengio2025}
Yoshua Bengio, Sören Mindermann, Daniel Privitera, Tamay Besiroglu, Rishi
  Bommasani, Stephen Casper, Yejin Choi, Philip Fox, Ben Garfinkel, Danielle
  Goldfarb, Hoda Heidari, Anson Ho, Sayash Kapoor, Leila Khalatbari, Shayne
  Longpre, Sam Manning, Vasilios Mavroudis, Mantas Mazeika, Julian Michael,
  Jessica Newman, Kwan~Yee Ng, Chinasa~T. Okolo, Deborah Raji, Girish Sastry,
  Elizabeth Seger, Theodora Skeadas, Tobin South, Emma Strubell, Florian
  Tramèr, Lucia Velasco, Nicole Wheeler, Daron Acemoglu, Olubayo Adekanmbi,
  David Dalrymple, Thomas~G. Dietterich, Edward~W. Felten, Pascale Fung,
  Pierre-Olivier Gourinchas, Fredrik Heintz, Geoffrey Hinton, Nick Jennings,
  Andreas Krause, Susan Leavy, Percy Liang, Teresa Ludermir, Vidushi Marda,
  Helen Margetts, John McDermid, Jane Munga, Arvind Narayanan, Alondra Nelson,
  Clara Neppel, Alice Oh, Gopal Ramchurn, Stuart Russell, Marietje Schaake,
  Bernhard Schölkopf, Dawn Song, Alvaro Soto, Lee Tiedrich, Gaël Varoquaux,
  Andrew Yao, Ya-Qin Zhang, Fahad Albalawi, Marwan Alserkal, Olubunmi Ajala,
  Guillaume Avrin, Christian Busch, André Carlos~Ponce de~Leon Ferreira~de
  Carvalho, Bronwyn Fox, Amandeep~Singh Gill, Ahmet~Halit Hatip, Juha
  Heikkilä, Gill Jolly, Ziv Katzir, Hiroaki Kitano, Antonio Krüger, Chris
  Johnson, Saif~M. Khan, Kyoung~Mu Lee, Dominic~Vincent Ligot, Oleksii
  Molchanovskyi, Andrea Monti, Nusu Mwamanzi, Mona Nemer, Nuria Oliver, José
  Ramón~López Portillo, Balaraman Ravindran, Raquel~Pezoa Rivera, Hammam
  Riza, Crystal Rugege, Ciarán Seoighe, Jerry Sheehan, Haroon Sheikh, Denise
  Wong, and Yi~Zeng.
\newblock International {AI} safety report, 2025.
\newblock URL \url{https://arxiv.org/abs/2501.17805}.

\bibitem[B{\"o}ckerman and Maliranta(2012)]{bockerman2012globalization}
Petri B{\"o}ckerman and Mika Maliranta.
\newblock Globalization, creative destruction, and labour share change:
  {Evidence} on the determinants and mechanisms from longitudinal plant-level
  data.
\newblock \emph{Oxford Economic Papers}, 64\penalty0 (2):\penalty0 259--280,
  2012.

\bibitem[Bostrom(2014)]{bostrom2014}
Nick Bostrom.
\newblock \emph{Superintelligence: {Paths}, Dangers, Strategies}.
\newblock Oxford University Press, 2014.

\bibitem[Brundage et~al.(2018)Brundage, Avin, Clark, Toner, Eckersley,
  Garfinkel, Dafoe, Scharre, Zeitzoff, Filar, et~al.]{brundage2018}
Miles Brundage, Shahar Avin, Jack Clark, Helen Toner, Peter Eckersley, Ben
  Garfinkel, Allan Dafoe, Paul Scharre, Thomas Zeitzoff, Bobby Filar, et~al.
\newblock The malicious use of artificial intelligence: {Forecasting},
  prevention, and mitigation.
\newblock \emph{arXiv preprint arXiv:1802.07228}, 2018.

\bibitem[{Center for Strategic and International Studies}(2024)]{csis2024}
{Center for Strategic and International Studies}.
\newblock The role of drones in the {Russia-Ukraine War}.
\newblock Technical report, CSIS, 2024.

\bibitem[Collingridge(1980)]{collingridge1980}
David Collingridge.
\newblock \emph{The Social Control of Technology}.
\newblock Frances Pinter, 1980.

\bibitem[Costa(2000)]{costa2000wage}
Dora~L Costa.
\newblock The wage and the length of the work day: {From} the 1890s to 1991.
\newblock \emph{Journal of Labor Economics}, 18\penalty0 (1):\penalty0
  156--181, 2000.

\bibitem[Crafts(2022)]{crafts2022}
Nicholas Crafts.
\newblock The 15-hour week: {Keynes's} prediction revisited.
\newblock \emph{Economica}, 89\penalty0 (356):\penalty0 687--700, 2022.

\bibitem[Dafoe(2018)]{dafoe2018}
Allan Dafoe.
\newblock {AI} governance: {A} research agenda.
\newblock \emph{Governance of AI Program, Future of Humanity Institute,
  University of Oxford}, 2018.

\bibitem[Dawes(1980)]{dawes1980}
Robyn~M Dawes.
\newblock Social dilemmas.
\newblock \emph{Annual Review of Psychology}, 31\penalty0 (1):\penalty0
  169--193, 1980.

\bibitem[{DeepSeek-AI}(2025)]{deepseek2025}
{DeepSeek-AI}.
\newblock {DeepSeek-R1}: {Incentivizing} reasoning capability in {LLMs} via
  reinforcement learning.
\newblock \emph{arXiv preprint arXiv:2501.12948}, 2025.

\bibitem[Diamond and Dybvig(1983)]{diamond1983}
Douglas~W. Diamond and Philip~H. Dybvig.
\newblock Bank runs, deposit insurance, and liquidity.
\newblock \emph{Journal of Political Economy}, 91\penalty0 (3):\penalty0
  401--419, 1983.

\bibitem[Elsby et~al.(2013)Elsby, Hobijn, and {\c{S}}ahin]{elsby2013decline}
Michael~WL Elsby, Bart Hobijn, and Ay{\c{s}}eg{\"u}l {\c{S}}ahin.
\newblock The decline of the {US} labor share.
\newblock \emph{Brookings Papers on Economic Activity}, 2013\penalty0
  (2):\penalty0 1--63, 2013.

\bibitem[{Federal Reserve Board}(2025)]{fedreserve2025}
{Federal Reserve Board}.
\newblock The state of {AI} competition in advanced economies, 2025.
\newblock URL
  \url{https://www.federalreserve.gov/econres/notes/feds-notes/the-state-of-ai-competition-in-advanced-economies-20251006.html}.

\bibitem[Flanagan(2003)]{flanagan2003}
Robert~J Flanagan.
\newblock Labor standards and international competitive advantage.
\newblock \emph{International Labour Standards: {Globalization}, Trade, and
  Public Policy}, pages 15--60, 2003.

\bibitem[Foster and Grundmann(2006)]{foster2006}
Kevin~R. Foster and Hajo Grundmann.
\newblock Antibiotic resistance--origins, evolution, selection and spread.
\newblock \emph{Ciba Foundation Symposium}, 207:\penalty0 1--16, 2006.

\bibitem[Frank(1999)]{frank1999}
Robert~H Frank.
\newblock \emph{Luxury Fever: {Why} Money Fails to Satisfy in an Era of
  Excess}.
\newblock Free Press, 1999.

\bibitem[Fudenberg and Maskin(1986)]{fudenberg1986}
Drew Fudenberg and Eric Maskin.
\newblock The folk theorem in repeated games with discounting or with
  incomplete information.
\newblock \emph{Econometrica}, 54\penalty0 (3):\penalty0 533--554, 1986.

\bibitem[{Future of Life Institute}(2023)]{fli2023pause}
{Future of Life Institute}.
\newblock Pause giant {AI} experiments: {An} open letter, March 2023.
\newblock URL
  \url{https://futureoflife.org/open-letter/pause-giant-ai-experiments/}.

\bibitem[Gordon(2016)]{gordon2016}
Robert~J. Gordon.
\newblock \emph{The rise and fall of {A}merican growth: {The U.S.} standard of
  living since the {Civil War}}.
\newblock Princeton University Press, Princeton, NJ, 2016.

\bibitem[Grace et~al.(2018)Grace, Salvatier, Dafoe, Zhang, and
  Evans]{grace2018}
Katja Grace, John Salvatier, Allan Dafoe, Baobao Zhang, and Owain Evans.
\newblock When will {AI} exceed human performance? {Evidence} from {AI}
  experts.
\newblock \emph{Journal of Artificial Intelligence Research}, 62:\penalty0
  729--754, 2018.

\bibitem[Hardin(1968)]{hardin1968}
Garrett Hardin.
\newblock The tragedy of the commons.
\newblock \emph{Science}, 162\penalty0 (3859):\penalty0 1243--1248, 1968.

\bibitem[Herz(1950)]{herz1950}
John~H. Herz.
\newblock Idealist internationalism and the security dilemma.
\newblock \emph{World Politics}, 2\penalty0 (2):\penalty0 157--180, 1950.

\bibitem[Huberman and Minns(2007)]{huberman2007}
Michael Huberman and Chris Minns.
\newblock The times they are not changin': {Days} and hours of work in old and
  new worlds, 1870--2000.
\newblock \emph{Explorations in Economic History}, 44\penalty0 (4):\penalty0
  538--567, 2007.

\bibitem[{IAB-Forum}(2023)]{iab2023}
{IAB-Forum}.
\newblock From `{the Sick Man of Europe}' to the `{German Job Miracle}'.
\newblock \emph{Institute for Employment Research}, 2023.
\newblock URL
  \url{https://iab-forum.de/en/from-the-sick-man-of-europe-to-the-german-job-miracle/}.

\bibitem[Jervis(1978)]{jervis1978}
Robert Jervis.
\newblock Cooperation under the security dilemma.
\newblock \emph{World Politics}, 30\penalty0 (2):\penalty0 167--214, 1978.

\bibitem[Kahneman and Tversky(1979)]{kahneman1979prospect}
Daniel Kahneman and Amos Tversky.
\newblock Prospect theory: {An} analysis of decision under risk.
\newblock \emph{Econometrica}, 47\penalty0 (2):\penalty0 263--291, 1979.

\bibitem[Karabarbounis and Neiman(2014)]{karabarbounis2014global}
Loukas Karabarbounis and Brent Neiman.
\newblock The global decline of the labor share.
\newblock \emph{The Quarterly Journal of Economics}, 129\penalty0 (1):\penalty0
  61--103, 2014.

\bibitem[Keynes(1930)]{keynes1930}
John~Maynard Keynes.
\newblock Economic possibilities for our grandchildren.
\newblock \emph{Essays in Persuasion}, 1930.

\bibitem[Koblentz(2009)]{koblentz2009}
Gregory~D Koblentz.
\newblock \emph{Living Weapons: {Biological} Warfare and International
  Security}.
\newblock Cornell University Press, 2009.

\bibitem[Koutsoupias and Papadimitriou(1999)]{koutsoupias1999}
Elias Koutsoupias and Christos Papadimitriou.
\newblock Worst-case equilibria.
\newblock In \emph{STACS 99: 16th Annual Symposium on Theoretical Aspects of
  Computer Science}, volume 1563 of \emph{Lecture Notes in Computer Science},
  pages 404--413. Springer, 1999.

\bibitem[{L\&E Global}(2024)]{leglobal2024}
{L\&E Global}.
\newblock Working conditions \& working hours in {France}.
\newblock Employment law guide, 2024.

\bibitem[Maas(2019)]{maas2019viable}
Matthijs~M Maas.
\newblock How viable is international arms control for military artificial
  intelligence? {Three} lessons from nuclear weapons.
\newblock \emph{Contemporary Security Policy}, 40\penalty0 (3):\penalty0
  285--311, 2019.

\bibitem[Martinelli(2017)]{martinelli2017}
Luke Martinelli.
\newblock The fiscal and distributional implications of alternative universal
  basic income schemes in the {UK}.
\newblock \emph{IPR Working Paper}, 2017.

\bibitem[Mearsheimer(2001)]{mearsheimer2001}
John~J. Mearsheimer.
\newblock \emph{The Tragedy of Great Power Politics}.
\newblock WW Norton \& Company, 2001.

\bibitem[Melitz(2003)]{melitz2003}
Marc~J Melitz.
\newblock The impact of trade on intra-industry reallocations and aggregate
  industry productivity.
\newblock \emph{Econometrica}, 71\penalty0 (6):\penalty0 1695--1725, 2003.

\bibitem[Nieczypor and Matuszak(2025)]{osw2025}
Krzysztof Nieczypor and Sławomir Matuszak.
\newblock Game of drones: {The} production and use of {Ukrainian} battlefield
  unmanned aerial vehicles, 2025.
\newblock URL
  \url{https://www.osw.waw.pl/en/publikacje/osw-commentary/2025-10-14/game-drones-production-and-use-ukrainian-battlefield-unmanned}.

\bibitem[Nisan et~al.(2007)Nisan, Roughgarden, Tardos, and Vazirani]{nisan2007}
Noam Nisan, Tim Roughgarden, Eva Tardos, and Vijay~V. Vazirani.
\newblock \emph{Algorithmic game theory}.
\newblock Cambridge University Press, 2007.

\bibitem[{OECD}(2025)]{oecd2025}
{OECD}.
\newblock Average annual hours actually worked per worker.
\newblock OECD Data Explorer, 2025.
\newblock URL \url{https://data-explorer.oecd.org/}.

\bibitem[O'Keefe et~al.(2020)O'Keefe, Cihon, Garfinkel, Flynn, Leung, and
  Anderljung]{okeefe2020}
Cullen O'Keefe, Peter Cihon, Ben Garfinkel, Carrick Flynn, Jade Leung, and
  Markus Anderljung.
\newblock The windfall clause: {Distributing} the benefits of {AI} for the
  common good.
\newblock In \emph{Proceedings of the AAAI/ACM Conference on AI, Ethics, and
  Society}, pages 327--331, 2020.

\bibitem[Olson(1965)]{olson1965}
Mancur Olson.
\newblock \emph{The Logic of Collective Action: {Public} Goods and the Theory
  of Groups}.
\newblock Harvard University Press, Cambridge, MA, 1965.

\bibitem[{OpenAI}(2019)]{openai2019lp}
{OpenAI}.
\newblock {OpenAI LP}.
\newblock OpenAI Blog, March 2019.
\newblock URL \url{https://openai.com/index/openai-lp/}.

\bibitem[{OpenAI}(2025)]{openai2025structure}
{OpenAI}.
\newblock Evolving {OpenAI}'s structure.
\newblock OpenAI Blog, May 2025.
\newblock URL \url{https://openai.com/index/evolving-our-structure/}.

\bibitem[{Organisation for the Prohibition of Chemical Weapons
  (OPCW)}(2025)]{opcw2025}
{Organisation for the Prohibition of Chemical Weapons (OPCW)}.
\newblock {OPCW} by the numbers, 2025.
\newblock URL \url{https://www.opcw.org/}.

\bibitem[Ostrom(1990)]{ostrom1990}
Elinor Ostrom.
\newblock \emph{Governing the Commons: {The} Evolution of Institutions for
  Collective Action}.
\newblock Cambridge University Press, Cambridge, 1990.

\bibitem[Ostrom(2010)]{ostrom2010}
Elinor Ostrom.
\newblock Polycentric systems for coping with collective action and global
  environmental change.
\newblock \emph{Global Environmental Change}, 20\penalty0 (4):\penalty0
  550--557, 2010.

\bibitem[Panigrahy and Sharan(2025)]{panigrahy2025}
Rina Panigrahy and Vatsal Sharan.
\newblock Limitations on safe, trusted, {Artificial General Intelligence},
  2025.
\newblock URL \url{https://arxiv.org/abs/2509.21654}.

\bibitem[Pencavel(2018)]{pencavel2018}
John Pencavel.
\newblock Whose preferences are revealed in hours of work?
\newblock \emph{Economic Inquiry}, 56\penalty0 (1):\penalty0 9--24, 2018.

\bibitem[Prescott(2004)]{prescott2004}
Edward~C Prescott.
\newblock Why do {Americans} work so much more than {Europeans}?
\newblock \emph{Federal Reserve Bank of Minneapolis Quarterly Review},
  28\penalty0 (1):\penalty0 2--13, 2004.

\bibitem[Price(1995)]{price1995}
Richard Price.
\newblock A genealogy of the chemical weapons taboo.
\newblock \emph{International Organization}, 49\penalty0 (1):\penalty0 73--103,
  1995.

\bibitem[Rapoport and Chammah(1965)]{rapoport1965}
Anatol Rapoport and Albert~M. Chammah.
\newblock \emph{{Prisoner's Dilemma: A} Study in Conflict and Cooperation}.
\newblock University of Michigan Press, Ann Arbor, 1965.

\bibitem[Rodrik(1997)]{rodrik1997}
Dani Rodrik.
\newblock \emph{Has Globalization Gone Too Far?}
\newblock Institute for International Economics, 1997.

\bibitem[Russell(2019)]{russell2019}
Stuart Russell.
\newblock \emph{Human Compatible: {Artificial} Intelligence and the Problem of
  Control}.
\newblock Viking, New York, 2019.

\bibitem[Russell et~al.(2015)Russell, Dewey, and Tegmark]{russell2015}
Stuart Russell, Daniel Dewey, and Max Tegmark.
\newblock Research priorities for robust and beneficial artificial
  intelligence.
\newblock \emph{AI Magazine}, 36\penalty0 (4):\penalty0 105--114, 2015.

\bibitem[Sastry et~al.(2024)Sastry, Heim, Anderljung, Kinniment, Miotto, Kruel,
  Garfinkel, and Dafoe]{sastry2024}
Girish Sastry, Lennart Heim, Markus Anderljung, Megan Kinniment, Nicola Miotto,
  Alex Kruel, Ben Garfinkel, and Allan Dafoe.
\newblock Computing power and the governance of artificial intelligence.
\newblock \emph{arXiv preprint arXiv:2402.16179}, 2024.

\bibitem[Schelling(1960)]{schelling1960}
Thomas~C. Schelling.
\newblock \emph{The Strategy of Conflict}.
\newblock Harvard University Press, 1960.

\bibitem[Schelling(1973)]{schelling1973}
Thomas~C. Schelling.
\newblock \emph{Hockey Helmets, Concealed Weapons, and Daylight Saving: {A}
  Study of Binary Choices with Externalities}, volume~17.
\newblock SAGE Publications, 1973.

\bibitem[Schnaiberg(1980)]{schnaiberg1980}
Allan Schnaiberg.
\newblock \emph{The Environment: {From} Surplus to Scarcity}.
\newblock Oxford University Press, 1980.

\bibitem[Schor(1992)]{schor1992}
Juliet~B. Schor.
\newblock \emph{The Overworked {American}: {The} Unexpected Decline of
  Leisure}.
\newblock Basic Books, New York, 1992.

\bibitem[Schumpeter(1942)]{schumpeter1942}
Joseph~A Schumpeter.
\newblock \emph{Capitalism, Socialism and Democracy}.
\newblock Harper \& Brothers, 1942.

\bibitem[Stigler(1964)]{stigler1964}
George~J Stigler.
\newblock A theory of oligopoly.
\newblock \emph{Journal of Political Economy}, 72\penalty0 (1):\penalty0
  44--61, 1964.

\bibitem[Syverson(2011)]{syverson2011}
Chad Syverson.
\newblock What determines productivity?
\newblock \emph{Journal of Economic Literature}, 49\penalty0 (2):\penalty0
  326--365, 2011.

\bibitem[{The New York Times}(2025)]{nytimes2025hustle}
{The New York Times}.
\newblock 996 hustle culture in tech.
\newblock \emph{The New York Times}, September 2025.
\newblock URL
  \url{https://www.nytimes.com/2025/09/28/business/996-hustle-culture-tech.html}.
\newblock Accessed: 2025-11-30.

\bibitem[{United Nations Office for Disarmament Affairs}(2025)]{unoda2025}
{United Nations Office for Disarmament Affairs}.
\newblock {Biological Weapons Convention} - {Universality}, 2025.
\newblock URL \url{https://www.un.org/disarmament/}.

\bibitem[Van~Parijs(1995)]{vanparijs1995}
Philippe Van~Parijs.
\newblock \emph{Real Freedom for All: {What} (If Anything) Can Justify
  Capitalism?}
\newblock Clarendon Press, 1995.

\bibitem[Watling and Reynolds(2023)]{watling2023}
Jack Watling and Nick Reynolds.
\newblock Meatgrinder: {Russian} tactics in the second year of its invasion of
  {Ukraine}.
\newblock \emph{Royal United Services Institute}, 2023.

\bibitem[Whitehead(1925)]{whitehead1925}
Alfred~North Whitehead.
\newblock \emph{Science and the Modern World}.
\newblock Macmillan, New York, 1925.
\newblock Lowell Lectures, 1925.

\bibitem[{Wikipedia contributors}(2025)]{wiki996}
{Wikipedia contributors}.
\newblock 996 working hour system.
\newblock Wikipedia, 2025.
\newblock URL \url{https://en.wikipedia.org/wiki/996\_working\_hour\_system}.
\newblock Accessed: 2025-11-29.

\bibitem[{World Bank}(2024)]{worldbank2024}
{World Bank}.
\newblock {GDP} (current {US}\$) - {European Union}, {United States}.
\newblock World Bank Open Data, 2024.
\newblock URL \url{https://data.worldbank.org/indicator/NY.GDP.MKTP.CD}.
\newblock Accessed: 2025-11-29.

\end{thebibliography}

\end{document}